\documentclass[11pt,a4paper]{article}

\newif\iffull
\fulltrue

\usepackage{amsmath,amsfonts,amssymb,amsthm}
\usepackage{mathbbol}
\usepackage{amsthm}
\usepackage{graphicx,color}
\usepackage{boxedminipage}
\iffull
\usepackage[linesnumbered, boxed, algosection,noline]{algorithm2e}
\usepackage[pdftex, plainpages = false, pdfpagelabels, 
                 bookmarks=false,
                 bookmarksopen = true,
                 bookmarksnumbered = true,
                 breaklinks = true,
                 linktocpage,
                 pagebackref,
                 colorlinks = true,  
                 linkcolor = blue,
                 urlcolor  = blue,
                 citecolor = red,
                 anchorcolor = green,
                 hyperindex = true,
                 hyperfigures
                 ]{hyperref} 
\usepackage{vmargin}
\setmarginsrb{1in}{1in}{1in}{1in}{0pt}{0pt}{0pt}{6mm}
\fi
\usepackage{breakurl}

\hypersetup{breaklinks=true}
\usepackage{framed}
\usepackage{thmtools}
\usepackage{thm-restate}
\usepackage{xspace}
\usepackage{todonotes}
\usepackage{xifthen}
\usepackage{tabularx}
\usepackage{tikz}
\usepackage{pgfplots}

 \usetikzlibrary{calc}

\newtheorem{theorem}{Theorem}
\newtheorem{proposition}{Proposition}
\newtheorem{lemma}{Lemma}

\newtheorem{claim}{Claim}
\newtheorem{corollary}{Corollary}

\newcommand{\pname}{\textsc}
\newcommand{\ProblemFormat}[1]{\pname{#1}}
\newcommand{\ProblemIndex}[1]{\index{problem!\ProblemFormat{#1}}}
\newcommand{\ProblemName}[1]{\ProblemFormat{#1}\ProblemIndex{#1}{}\xspace}

\newcommand{\probPCAOut}{\ProblemName{PCA with Outliers}}

\DeclareMathOperator{\rank}{rank}
\DeclareMathOperator{\sign}{sign}
\DeclareMathOperator{\proj}{proj}
\DeclareMathOperator{\vspan}{span}
\DeclareMathOperator{\comp}{comp}

\DeclareMathOperator{\operatorClassNP}{{\sf NP}}
\newcommand{\classNP}{\ensuremath{\operatorClassNP}}

\DeclareMathOperator{\operatorClassFPT}{{\sf FPT}\xspace}
\newcommand{\classFPT}{\ensuremath{\operatorClassFPT}\xspace}
\DeclareMathOperator{\operatorClassW}{{\sf W}}
\newcommand{\classW}[1]{\ensuremath{\operatorClassW[#1]}}


 

\newlength{\RoundedBoxWidth}
\newsavebox{\GrayRoundedBox}
\newenvironment{GrayBox}[1]%
   {\setlength{\RoundedBoxWidth}{.93\linewidth}
    \def\boxheading{#1}
    \begin{lrbox}{\GrayRoundedBox}
       \begin{minipage}{\RoundedBoxWidth}}%
   {   \end{minipage}
    \end{lrbox}
    \begin{center}
    \begin{tikzpicture}%
       \node(Text)[draw=black!20,fill=white,rounded corners,%
             inner sep=2ex,text width=\RoundedBoxWidth]%
             {\usebox{\GrayRoundedBox}};
        \coordinate(x) at (current bounding box.north west);
        \node [draw=white,rectangle,inner sep=3pt,anchor=north west,fill=white] 
        at ($(x)+(6pt,.75em)$) {\boxheading};
    \end{tikzpicture}
    \end{center}}     

\newenvironment{defproblemx}[2][]{\noindent\ignorespaces%
                                \FrameSep=6pt%
                                \parindent=0pt%
                \vspace*{-1.5em}
                \ifthenelse{\isempty{#1}}{%
                  \begin{GrayBox}{\textsc{#2}}%
                }{%
                  \begin{GrayBox}{\textsc{#2} parameterized by~{#1}}%
                }
                %
                %
                \begin{tabular*}{\linewidth}{@{\hspace{.1em}} >{\itshape} p{0.1\linewidth} p{0.8\linewidth} @{}}%
            }{
                \end{tabular*}%
                \end{GrayBox}%
                \ignorespacesafterend
            }  

\newcommand{\defparproblema}[4]{
  \begin{defproblemx}[#3]{#1}
    Input:  & #2 \\
    Task: & #4
  \end{defproblemx}
}%

\newcommand{\defproblema}[3]{
  \begin{defproblemx}{#1}
    Input:  & #2 \\
    Task: & #3
  \end{defproblemx}
}%


\newcommand{\Oh}{\mathcal{O}}

\interfootnotelinepenalty=10000

\begin{document}
\iffull
    \title{Refined Complexity of PCA with Outliers\thanks{This work is supported by the Research Council of Norway via the project ``MULTIVAL''.}}
\author{
Fedor V. Fomin\thanks{
    Department of Informatics, University of Bergen, Norway, \texttt{\{fedor.fomin, petr.golovach, fahad.panolan, kirill.simonov\}@uib.no}.
} \addtocounter{footnote}{-1}
\and
Petr A. Golovach\footnotemark{} \addtocounter{footnote}{-1}
\and 
Fahad Panolan\footnotemark{} \addtocounter{footnote}{-1}
\and 
Kirill Simonov\footnotemark{}
}
\maketitle
\else
\twocolumn[
\icmltitle{Refined Complexity of PCA with Outliers}



\icmlsetsymbol{equal}{*}

\begin{icmlauthorlist}
\icmlauthor{Fedor Fomin}{equal,a}
\icmlauthor{Petr Golovach}{equal,a}
\icmlauthor{Fahad Panolan}{equal,a}
\icmlauthor{Kirill Simonov}{equal,a}
\end{icmlauthorlist}

\icmlaffiliation{a}{Department of Informatics, University of Bergen, Norway}

\icmlcorrespondingauthor{Fedor Fomin}{fomin@ii.uib.no}
\icmlcorrespondingauthor{Petr Golovach}{pgo041@uib.no}
\icmlcorrespondingauthor{Fahad Panolan}{Fahad.Panolan@uib.no}
\icmlcorrespondingauthor{Kirill Simonov}{Kirill.Simonov@uib.no}

\icmlkeywords{PCA, outliers, ETH, algebraic arrangements}

\vskip 0.3in
]



\printAffiliationsAndNotice{\icmlEqualContribution} 
\fi

\begin{abstract}
Principal component  analysis  (PCA)  is one of the most fundamental procedures in  exploratory data analysis and is the basic step in   applications ranging from    quantitative finance and  bioinformatics to image analysis and neuroscience. 
However, it is well-documented that the  applicability of PCA in many real scenarios could be  constrained by an ``immune deficiency''  to outliers such as  corrupted observations. We consider the following algorithmic question about the PCA with outliers. For a 
set of $n$ points in $\mathbb{R}^{d}$, how to learn a subset of  points, say 1\%  of the total number of points, such that 
the remaining  part of the points is best fit into some unknown  $r$-dimensional subspace? We provide a rigorous algorithmic analysis of the problem. We show that the problem is solvable in time  $n^{\Oh(d^2)}$. In particular, for constant dimension the problem is solvable in polynomial time. We complement the algorithmic result by the lower bound, showing that unless Exponential Time Hypothesis fails, in time  $f(d)n^{o(d)}$, for any function $f$ of $d$, it is impossible not only to solve the problem exactly but even to approximate it within a constant factor. 
\end{abstract}

\section{Introduction}
\medskip
\noindent\textbf{Problem statement and motivation.} 
Classical \emph{principal component analysis} (PCA) is 
one of the most popular and successful techniques used for dimension reduction  in data analysis and machine learning \cite{pearson1901liii,hotelling1933analysis,eckart1936approximation}. In PCA 
one seeks the best   low-rank approximation of data matrix $M$ by solving
\begin{eqnarray*}
\text{ minimize } \|M-L\|^2_F \\ \text{ subject to } \rank(L) \leq  r.
\end{eqnarray*}
Here $||A||_F^2 = \sum_{i, j} a_{ij}^2$ is the square of the Frobenius norm of matrix $A$.
By the Eckart-Young theorem \cite{eckart1936approximation},  PCA is efficiently solvable via
  Singular Value Decomposition (SVD). PCA is used as 
 a preprocessing step in a great variety of  modern applications including face recognition, data classification,     and analysis of  social networks.  

In this paper we consider a variant of PCA with outliers, where we wish to recover a low-rank matrix from large but sparse errors.
Suppose that we have $n$ points (observations) in $d$-dimensional space.  We know that a part of the   points are  arbitrarily located (say, produced by corrupted observations) while the remaining points are close to an $r$-dimensional true subspace. We do not have any information about the true subspace and about the corrupted observations. Our task is to learn the true subspace and to identify the outliers. As a common practice, we collect the points into $n\times d$ matrix $M$, thus each of the rows of $M$ is a point   and the columns of  
$M$ are the coordinates. 
However,  it is very likely that  PCA of $M$ will not reveal any reasonable information about non-corrupted observations---well-documented drawback  of PCA is its vulnerability to even very small number of outliers, an example is shown in Figure~\ref{fig:badPCA}. 

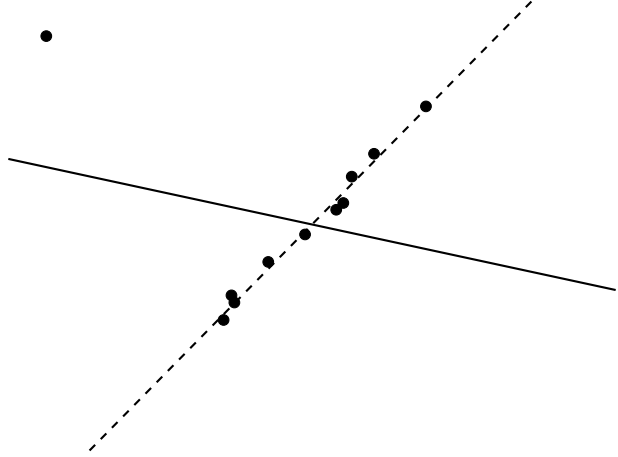
\begin{figure}[ht]
    \centering
        \begin{tikzpicture}[scale=1]
            \filldraw (-0.5763, -0.4971) circle (2pt);
            \filldraw (-1.0606, -0.9408) circle (2pt);
            \filldraw (-0.0914, -0.1326) circle (2pt);
            \filldraw (0.4123, 0.2850) circle (2pt);
            \filldraw (1.5009, 1.5670) circle (2pt);
            \filldraw (-1.0224, -1.0361) circle (2pt);
            \filldraw (0.8172, 0.9383) circle (2pt);
            \filldraw (-1.1646, -1.2677) circle (2pt);
            \filldraw (0.3191, 0.1952) circle (2pt);
            \filldraw (0.5238, 0.6348) circle (2pt);
            \filldraw (-3.5000, 2.5000) circle (2pt);
            
            \draw[thick, dashed] (-2.93,-3)--(2.93,3);
            \draw[thick] (-4,0.87)--(4,-0.87);

        \end{tikzpicture}
        \caption{An illustration on how outliers impact PCA. The optimal approximation line (in dashed) of the given set of points without the evident outlier shows the linear structure of the dataset. However, when the outlier is present, the principal component (in solid) changes drastically.}
    \label{fig:badPCA}
\end{figure}


%
%
 
Matrix formulation   suggests the following  interpretation: we seek a low-rank matrix $L$ that, with an exception in few rows,  approximates $M$  best.  
%
%
%
%

\medskip
\defproblema{\probPCAOut}%
{Data matrix $M\in \mathbb{R}^{n \times d}$, integer parameters $r$ and $k$.}%
{ 
\begin{eqnarray*}
\text{ minimize }& & \|M-L-S \|^2_F \\ 
\text{ subject to } 
& &  L,S\in \mathbb{R}^{n \times d}, \\
& &\rank(L) \leq  r, \text{ and}\\
          & & S \text{ has at most } k \text{ non-zero rows}. 
\end{eqnarray*}
}
The geometric interpretation of \probPCAOut is very natural:  Given $n$ points in $\mathbb{R}^{d}$, we seek for a set of $k$ points whose removal leaves  the remaining $n-k$ points as close as possible  to some $r$-dimensional subspace.   


\medskip
\noindent\textbf{Related work.} 
 \probPCAOut  belongs to the large class of  
extensively studied robust PCA problems,  see e.g. \cite{VaswaniN18,XuCS10,bouwmans2016handbook}.  
   In the robust PCA setting we observe a noisy version  $M$  of data  matrix $L$ whose principal components we have to discover. 
   In the case when  $M$ is a ``slightly'' disturbed version of $L$,  PCA performed on $M$ provides a reasonable approximation for $L$. However, when $M$ is very ``noisy'' version of $L$,    like being corrupted  by  a few outliers, 
   even one corrupted outlier can arbitrarily alter the quality of the approximation. 

One of the approaches to robust PCA, which is relevant to our work,  is to model outliers as additive sparse matrix. Thus 
we have a data $d\times n$ matrix $M$, which is the superposition of a low-rank component $L$ and a sparse component $S$. 
That is, 
$M=L+S.$
This  approach became popular after the works of  
Cand{\`{e}}s et al.   \cite{CandesLMW11}, Wright et al.  \cite{WrightGRPM09}, and 
 Chandrasekaran et al. \cite{ChandrasekaranSPW11}. A significant body of work on the robust PCA  problem has been centered around proving that, under some feasibility assumptions on $M$, $L$, and $S$, a solution to 
 \begin{eqnarray}\label{eq_optpr1}
\text{ minimize }& & \rank(L) + \lambda\| S\| _{0} \\ 
\text{ subject to }  & &  M=L+S,  \nonumber
\end{eqnarray}
 where $\| S\| _{0}$ denotes the number of non-zero columns in matrix $S$  and $\lambda$ is a regularizing parameter, recovers matrix $L$ uniquely. While  optimization problem \eqref{eq_optpr1}  is \classNP-hard  \cite{GillisV18},  
it is possible to show that   under certain assumptions on $L$ and $S$, its convex  relaxation can recover these matrices  efficiently.  

The problem strongly related to  \eqref{eq_optpr1}  was studied in computational complexity under the name \textsc{Matrix Rigidity}  \cite{grigoRigid,grigoRigidTra,Valiant1977}. Here, for a given matrix $M$, and integers $r$ and $k$, the task is to decide whether at most $k$ entries of $M$ can be changes so that the rank of the resulting matrix is at most $r$. Equivalently, this is the problem to decide whether a given matrix $M=L+S$, where $\rank(L)\leq r$ and  $\| S\| _{0}\leq k$. 
Fomin et al. \cite{FominLMSZ18} gave an algorithm solving \textsc{Matrix Rigiity} in time 
$2^{\Oh(r \cdot k \cdot \log (r\cdot k))} \cdot (nd)^{\Oh(1)}$.
On the other hand, they show that the  problem is \classW1-hard parameterized by $k$. In particular, this implies that an algorithm of running time  $f(k)  \cdot (rnd)^{\Oh(1)}$ for this problem is highly unlikely for any function $f$ of $k$ only.
 
 A natural extension of the robust PCA approach  \eqref{eq_optpr1}  is to consider the noisy version of robust PCA: Given $M=L+S+N$, where $L$, $S$,  and $N$ are unknown, but $L$ is known to be low rank, $S$ is known to have a few non-zero rows, and noise matrix $N$ is of small Frobenius norm,   recover $L$. Wright et al. \cite{WrightGRPM09} studied the following model of noisy robust PCA:
\begin{eqnarray}\label{eq:prob2}
\text{ minimize }& & \rank(L) + \lambda\| S\| _{0} \\ 
\text{ subject to } 
& &  \|M-L-S\|_F^{2}\leq \varepsilon. \nonumber
\end{eqnarray}
 Thus
\eqref{eq:prob2} models the situations when  we want to learn the principal components of $n$ points in $d$-dimensional space under the assumption that a small number of coordinates is corrupted. 
 



The study of the natural, and seemingly more difficult extension of \eqref{eq_optpr1}  to the PCA with outliers, was initiated by 
Xu et al. \cite{XuCS10}, who introduced the following  idealization of the problem.  
\begin{eqnarray}\label{eq:prob3}
\text{ minimize }& & \rank(L) + \lambda\| S\| _{0,r}\\ 
\text{ subject to }  \nonumber
& &  M=L+S. 
\end{eqnarray}
Here $\| S\| _{0,r}$ denotes the number of non-zero columns in matrix $S$ and $\lambda$ is a regularizing parameter.  Xu et al. \cite{XuCS10}  approached this problem by building its  convex surrogate and applying efficient convex optimization-based algorithm for the surrogate.  Chen et al. \cite{chen2011robust} studied the variant of the problem with the partially observed data. 
Similar as \eqref{eq:prob2} is the noisy version of the robust PCA model \eqref{eq_optpr1}, 
the \probPCAOut problem studied in our work can be seen as a noisy version of \eqref{eq:prob3}.

\medskip
\noindent\textbf{Our results.}  
Even though \probPCAOut  was assumed to be NP-hard,  to the best of our knowledge, this has never been studied formally.  
While  NP-hardness  is a serious strike against the  tractability of the problem,  on the other hand, it only says that in the worst case the problem is not tractable. But since the complexity of the problem could be governed  by several parameters like the rank $r$ of $L$, the number of outliers $k$ or dimension $d$ of $M$, it is natural to ask how these parameters influence the complexity of the problem. For example, when $k$ is a small constant, we can guess which points are outliers and run PCA for the remaining points. This will bring us to $n^k$ calls of PCA which is polynomial for constant $k$ and is exponential when $k$ is a fraction of $n$. 

In this paper we give an algorithm 
 solving \probPCAOut roughly in time  $|M|^{\Oh(d^2)}$, where $|M|$ is the size of the input matrix $M$. Thus for fixed dimension $d$,   the problem is solvable  in polynomial time. 
 The algorithms works in polynomial time for any number of outliers $k$ and the rank $r$ of the recovered matrix $L$.
 Our algorithm strongly relies on the tools developed in computational algebraic geometry, in particular, for handling arrangements of algebraic surfaces in  $\mathbb{R}^d$ defined by polynomials of bounded degree. 

We complement our algorithmic result by a  complexity lower bound. Our lower bound not only implies that the problem is \classNP-hard when dimension $d$ is part of the input, it also rules out a possibility of certain type of algorithms for \probPCAOut. 
More precisely, assuming the Exponential Time Hypothesis (ETH),\footnote{ETH of Impagliazzo,  Paturi,   and Zane \cite{ImpagliazzoPZ01} is that 3-SAT with $n$-variables is not solvable in time $2^{o(n)}$.}  we show that for any constant $\omega \ge 1$,     \probPCAOut  cannot be $\omega$-approximated in time $f(d)|M|^{o(d)}$, for any function $f$ of $d$ only. 

Our algorithm is, foremost, of theoretical interest, especially in the presense of the nearly-matching lower bound showing that doing something essentially better is next to impossible. In practice, PCA is often applied to reduce high-dimensional datasets, and for this task the running time exponential in $d$ is not practical.
However, there are cases where such an algorithm could still be useful. One example could be the visualization of low-dimensional data, where the number of dimensions, even if it is small already, needs to be lowered down to two to actually draw the dataset. Another example could be when we suspect a small subset of features to be highly correlated, and we want to reduce them to one dimension in order to get rid of the redundancy in data. This potential application is well illustrated by the popular PCA tutorial \cite{Shlens14}, where essentially one-dimensional movement of a spring-mass is captured by three cameras, resulting in 6 features.

\section{Polynomial  algorithm for bounded dimension}\label{sec:alg}
\subsection{Preliminaries}

As a subroutine in our algorihm, we use a standard result about sampling points from cells of an arrangement of algebraic surfaces, so first we state some definitions and an algorithm from \cite{BasuPR06}.

We denote the ring of polynomials in variables $X_1$, \dots, $X_d$ with coefficients in $\mathbb{R}$ by $\mathbb{R}[X_1, \dots, X_d]$. By saying that an algebraic set $V$ in $\mathbb{R}^d$ is defined by $Q \in \mathbb{R}[X_1, \dots, X_d]$,  we mean that $V = \{x \in \mathbb{R}^d | Q(x_1, \dots, x_d) = 0\}$.
For a set of $s$ polynomials $\mathcal{P} = \{P_1, \dots, P_s\} \subset \mathbb{R}[X_1, \dots, X_d]$, a sign condition is specified by a sign vector $\sigma \in \{-1, 0, +1\}^s$, and the sign condition is non-empty over $V$ with respect to $\mathcal{P}$ if there is a point $x \in V$ such that
$$\sigma = (\sign(P_1(x)), \dots, \sign(P_s(x))),$$
where $\sign(x)$ is the sign function on real numbers defined as
$$\sign(x) = \begin{cases}
    1, \text{ if } x > 0,\\
    0, \text{ if } x = 0,\\
    -1, \text{ if } x < 0
\end{cases}$$
for $x \in \mathbb{R}$.

The \emph{realization space} of $\sigma \in \{-1, 0, +1\}^s$ over $V$ is the set
$$R(\sigma) = \{x | x \in V, \sigma = (\sign(P_1(x)), \dots, \sign(P_s(x)))\}.$$
If $R(\sigma)$ is not empty then each of its non-empty semi-algebrically connected (which is equivalent to just connected on semi-algebraic sets as proven in \cite{BasuPR06}, Theorem 5.22) components is a \emph{cell} of $\mathcal{P}$ over $V$.

For an algebraic set $W$ its real dimension is the maximal integer $d'$ such that there is a homeomorphism of $[0, 1]^{d'}$ in $W$. Naturally, if $W \subset \mathbb{R}^d$, then $d' \le d$.

The following theorem from \cite{BasuPR06} gives an algorithm to compute a point in each cell of $\mathcal{P}$ over $V$.

\begin{proposition}[\cite{BasuPR06}, Theorem 13.22]
    Let $V$ be an algebraic set in $\mathbb{R}^d$ of real dimension $d'$ defined by $Q(X_1, \dots, X_d) = 0$, where $Q$ is a polynomial in $\mathbb{R}[X_1, \dots, X_d]$ of degree at most $b$, and let $\mathcal{P} \subset \mathbb{R}[X_1, \dots, X_d]$ be a finite set of $s$ polynomials with
    each $P \in \mathcal{P}$ also of degree at most $b$. Let $D$ be a ring generated by the coefficients of $Q$ and the polynomials in $\mathcal{P}$. There is an algorithm which takes as input $Q$, $d'$ and $\mathcal{P}$ and computes a set of points meeting every non-empty cell of $V$ over $\mathcal{P}$. The algorithm uses at most
    $s^{d'} b^{O(d)}$ arithmetic operations in $D$.
    \label{prop:sample_points}
\end{proposition}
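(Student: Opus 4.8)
The plan is to reduce the problem to repeated applications of the critical-point method on carefully deformed algebraic sets, following \cite{BasuPR06}. First I would note that it suffices to produce, for every realizable sign condition $\sigma \in \{-1,0,+1\}^s$, at least one point in each semi-algebraically connected component of $R(\sigma)$. The key observation is that the closure of such a component is carved out of $V$ by setting only a few of the $P_i$ to zero: a dimension count shows that at most $d'$ of the polynomials can be simultaneously ``active'' on the relevant boundary strata. Hence, instead of iterating over all $3^s$ sign vectors, one iterates over subsets $I \subseteq \{1,\dots,s\}$ with $|I|\le d'$ together with a sign choice on $I$; this enumeration, of size $\binom{s}{\le d'} = s^{O(d')}$, is the source of the $s^{d'}$ factor in the bound.

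For each pair $(I,\text{signs})$ I would form the real algebraic set $W_I = V \cap \bigcap_{i\in I}\{P_i = 0\}$ and perturb the equations — replacing each $P_i = 0$ by $P_i = \pm\varepsilon$ for an infinitesimal $\varepsilon$ and turning the remaining weak inequalities into strict ones — so that the pieces we care about become open and no connected component is lost. Since $W_I$ may be unbounded and singular, the next step is the standard deformation making it bounded and smooth: intersect with a large ball, equivalently add a term $\zeta(X_1^{2e}+\dots+X_d^{2e})$ for a fresh infinitesimal $\zeta$ and large even exponent $2e$, obtaining a compact nonsingular variety over the field of algebraic Puiseux series. On a compact smooth variety a generic linear form attains a minimum on every connected component, and such a minimum is a critical point; therefore the finite set of critical points of that form meets every connected component of $W_I$.

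The computational core is then to list these critical points effectively. I would write down the critical-point system — the defining equations of $W_I$ plus the Jacobian condition that the gradient of the chosen linear form lies in the row span of the Jacobian of those equations — and solve this zero-dimensional system through a rational univariate representation. B\'ezout-type degree bounds give at most $b^{O(d)}$ complex solutions, and the univariate representation can be computed with $b^{O(d)}$ arithmetic operations in $D$ using multivariate resultant/subresultant machinery. Finally I would take limits as the infinitesimals $\zeta$ and $\varepsilon$ tend to $0$, using the effective limit-of-a-Puiseux-point procedures of \cite{BasuPR06}, to transport the sample points back onto $V$. Summing the $b^{O(d)}$ cost over the at most $s^{d'}$ choices of $(I,\text{signs})$ yields the claimed bound $s^{d'} b^{O(d)}$.

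The step I expect to be the main obstacle is proving that no connected component is ever missed, which has to be verified in two places: in the $\pm\varepsilon$ fattening, where one must show that the perturbed family still captures every connected component of every sign condition of the original family; and in the bounded-and-smooth deformation, where one must show that the limit of the deformed sample set hits every component of the original $W_I$. Both arguments rely on the infinitesimal-deformation and limit-computation apparatus of real algebraic geometry — Puiseux series, the good behaviour of semi-algebraic connectedness under such deformations, and effective limit extraction — rather than on elementary arguments, and making all of this precise and effective is the technically delicate part.
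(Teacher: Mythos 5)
The paper does not prove this proposition at all: it is imported verbatim as a black-box tool, labeled Proposition~\ref{prop:sample_points} and attributed to Theorem 13.22 of Basu, Pollack and Roy~\cite{BasuPR06}. So there is no ``paper's own proof'' to compare against; the authors deliberately delegate this to the reference.

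That said, your blind sketch is a reasonable high-level account of the critical-point machinery that does underlie the cited theorem. The four pillars you identify --- (i) enumerating subsets of at most $d'$ polynomials to cut out the relevant strata, giving the $s^{d'}$ factor; (ii) infinitesimal $\pm\varepsilon$ thickening of sign conditions so that nothing is lost; (iii) the bounded-and-smooth deformation with a fresh infinitesimal $\zeta$ and a high even power, followed by critical points of a generic linear form; and (iv) solving the resulting zero-dimensional system via a rational univariate representation with $b^{O(d)}$ cost, then taking effective Puiseux limits --- are all genuine ingredients of the BPR ``Sampling'' and ``Computing Realizable Sign Conditions'' algorithms. Two caveats are worth flagging. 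First, the assertion that ``at most $d'$ of the polynomials can be simultaneously active'' is only morally true: the correct justification is a transversality / dimension argument showing that every semi-algebraically connected component of every realizable sign condition contains, after the deformation, a point lying on some stratum cut out by at most $d'$ of the (perturbed) polynomials, and making this rigorous is where most of the length of the BPR proof goes. Second, the full algorithm actually needs the parametrized (``with archimedean parameters'') version of the critical-point method so that the limit extraction in step (iv) is effective and of controlled degree; you gesture at this with ``effective limit-of-a-Puiseux-point procedures,'' which is the right reference but is doing a lot of hidden work. Since the paper merely cites the result, none of these gaps affect the paper, but anyone wishing to expand your sketch into a proof would have to fill them along the lines of BPR Chapters 12--13.
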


On the practical side, we note that a number of routines from \cite{BasuPR06} is implemented in the SARAG library \cite{Caruso06}.

\subsection{Algorithm}

First, we emphasize on a folklore observation that geometrically the low-rank approximation matrix $L$ is defined as orthogonal projection of rows of $M$ on some $r$-dimensional subspace of $\mathbb{R}^d$. For the proof see e.g. \cite{BlumHK17}.

\begin{proposition}
    Given a matrix $M \in \mathbb{R}^{n \times d}$ with rows $m_1$, \ldots, $m_n$, the task of finding a matrix $L$ of rank at most $r$ which minimizes $||M - L||_F^2$ is equivalent to finding an $r$-dimensional subspace of $\mathbb{R}^d$ which minimizes the total squared distance from rows of $M$ treated as points in $\mathbb{R}^d$:
    \iffull
        $$\min_{\substack{L \in \mathbb{R}^{n \times d}\\\rank L \le r}} ||M - L||_F^2 = \min_{\substack{U \subset \mathbb{R}^d\\U \text{ is a linear subspace of } \dim r}} \sum_{i = 1}^n ||m_i - \proj_U m_i||_F^2,$$
    
    \else
    \begin{multline*}
        \min_{\substack{L \in \mathbb{R}^{n \times d}\\\rank L \le r}} ||M - L||_F^2 =\\ \min_{\substack{U \subset \mathbb{R}^d\\U \text{ is a linear subspace of } \dim r}} \sum_{i = 1}^n ||m_i - \proj_U m_i||_F^2,
    \end{multline*}
    \fi
    where $\proj_U x$ is the orthogonal projection of $x$ on $U$ for $x \in \mathbb{R}^d$.
    \label{prop:subspace}
\end{proposition}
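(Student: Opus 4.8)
The plan is to exploit the correspondence between rank-$\le r$ matrices and $r$-dimensional subspaces via row spaces. First I would observe that any $L \in \mathbb{R}^{n\times d}$ with $\rank L \le r$ has all of its rows lying in the row space of $L$, a linear subspace of $\mathbb{R}^d$ of dimension $\rank L \le r$; enlarging it arbitrarily, we may fix an $r$-dimensional subspace $U$ containing every row of $L$ (if $d < r$ the statement is vacuous, since every matrix then has rank at most $d \le r$ and we may take $U = \mathbb{R}^d$). Writing $\ell_1, \dots, \ell_n$ for the rows of $L$, we have $\|M - L\|_F^2 = \sum_{i=1}^n \|m_i - \ell_i\|^2$, and since the $\ell_i$ may be chosen as arbitrary and mutually independent vectors of $U$, minimizing over all $L$ whose rows lie in $U$ decouples into $n$ independent nearest-point problems in $U$.

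Next I would invoke the elementary orthogonality fact that for a point $m_i \in \mathbb{R}^d$ and a linear subspace $U$, the unique minimizer of $\|m_i - y\|$ over $y \in U$ is $y = \proj_U m_i$, because $m_i - \proj_U m_i \perp U$ and hence $\|m_i - y\|^2 = \|m_i - \proj_U m_i\|^2 + \|\proj_U m_i - y\|^2$ by the Pythagorean identity. Consequently, for every fixed $r$-dimensional subspace $U$,
\[
  \min_{\substack{L:\ \text{rows of } L \text{ lie in } U}} \|M - L\|_F^2 \;=\; \sum_{i=1}^n \|m_i - \proj_U m_i\|^2,
\]
with the minimum attained by the matrix $L$ whose $i$-th row is $\proj_U m_i$. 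This $L$ indeed has rank at most $r$: orthogonal projection onto $U$ is a linear map given by some $d \times d$ matrix $P$ with $\rank P = \dim U \le r$, so $L = MP$ and $\rank L \le \rank P \le r$.

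Finally I would assemble the two inequalities. For the direction ``$\ge$'': any $L$ with $\rank L \le r$ has its rows inside some $r$-dimensional subspace $U$, so $\|M - L\|_F^2 \ge \sum_{i} \|m_i - \proj_U m_i\|^2 \ge \min_U \sum_i \|m_i - \proj_U m_i\|^2$; taking the infimum over such $L$ shows the left-hand side is at least the right-hand side. For ``$\le$'': given any $r$-dimensional $U$, the matrix $L = MP$ above has rank at most $r$ and realizes the value $\sum_i \|m_i - \proj_U m_i\|^2$, so the left-hand side is at most this for every $U$, hence at most its minimum. Combining the two bounds gives the claimed equality, and since both directions are constructive, an optimal subspace yields an optimal $L$ and conversely.

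I do not expect a genuine obstacle, as this is a folklore fact; the only points that need mild care are (a) embedding the row space of an $L$ of rank strictly less than $r$ into a full $r$-dimensional subspace (and the degenerate case $d < r$), and (b) justifying the decoupling step, i.e.\ that the constraint ``$\rank L \le r$'' is, once an ambient $r$-dimensional $U$ is fixed, exactly the same as allowing the rows of $L$ to range freely over $U$ — which holds because a matrix all of whose rows lie in an $r$-dimensional space automatically has rank at most $r$.
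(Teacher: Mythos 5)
The paper does not actually prove Proposition~\ref{prop:subspace}; it presents it as a folklore observation and refers to Blum, Hopcroft, and Kannan for a proof. Your argument is a correct, self-contained derivation of the standard fact, following the natural route: any rank-$\le r$ matrix has its rows inside some $r$-dimensional subspace, for a fixed subspace $U$ the Frobenius minimization decouples row-by-row and is solved by orthogonal projection (by the Pythagorean identity), the projected matrix $MP$ has rank at most $\rank P = r$, and the two resulting inequalities yield the equality. The reasoning is sound and the two directions are assembled correctly. One small point to tidy: when $r > d$ the right-hand side is a minimum over an empty collection of $r$-dimensional subspaces of $\mathbb{R}^d$, so taking $U = \mathbb{R}^d$ is not literally an instance of the claim; it would be cleaner to simply note that the proposition (and the problem) implicitly assumes $r \le d$. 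This does not affect correctness in the regime of interest.
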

%

By Proposition \ref{prop:subspace}, if we fix an $r$-dimensional subspace $U$ containing the span of rows of $L$, then the outliers are automatically defined as $k$ farthest points from $U$ among $\{m_i\}_{i=1}^n$. In the next proposition, we give a precise statement of this.

\begin{proposition}
    The optimization objective of \probPCAOut for a given matrix $M \in \mathbb{R}^{n \times d}$ with rows $m_1$, \ldots, $m_n$ can be equivalently redefined as follows.
    \iffull
        $$\min_{\substack{L, S \in \mathbb{R}^{n \times d}\\\rank L \le r\\S \text{ has at most $k$ non-zero rows}}} ||M - L - S||_F^2 = \min_{\substack{U \subset \mathbb{R}^d\\U \text{ is a linear subspace of } \dim r}} ||M - L_U - S_U||_F^2,$$
    \else
    \begin{multline*}
        \min_{\substack{L, S \in \mathbb{R}^{n \times d}\\\rank L \le r\\S \text{ has at most $k$ non-zero rows}}} ||M - L - S||_F^2 \\= \min_{\substack{U \subset \mathbb{R}^d\\U \text{ is a linear subspace of } \dim r}} ||M - L_U - S_U||_F^2,
    \end{multline*}
    \fi
    where $S_U$ has $k$ non-zero rows which are $k$ rows of $M$ with the largest value of $||m_i - \proj_U m_i||_F^2$, and $L_U$ is the orthogonal projection of the rows of $(M - S_U)$ on $U$.
    $$S_U = \begin{pmatrix}
        m_1\\
        \vdots\\
        m_k\\
        0\\
        \vdots\\
        0
    \end{pmatrix}, \quad L_U = \begin{pmatrix}
        0\\
        \vdots\\
        0\\
        \proj_U m_{k + 1}\\
        \vdots\\
        \proj_U m_n
    \end{pmatrix},$$
    assuming that rows of $M$ are ordered by descending $||m_i - \proj_U m_i||_F^2$.
\label{prop:subspace_outliers}
\end{proposition}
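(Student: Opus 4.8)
The plan is to reduce the joint minimization over $(L,S)$ to a nested minimization in which the inner matrices are eliminated in closed form, and then to swap the order of the two outer minimizations. First I would use the fact that the squared Frobenius norm decomposes over rows: writing $\ell_i$ and $s_i$ for the $i$-th rows of $L$ and $S$, we have $\|M - L - S\|_F^2 = \sum_{i=1}^n \|m_i - \ell_i - s_i\|_2^2$. Fix a set $Z \subseteq \{1, \dots, n\}$ of at most $k$ indices on which $S$ is allowed to be non-zero. For every $i \in Z$ we may set $s_i := m_i - \ell_i$, making the $i$-th summand vanish, while for $i \notin Z$ we are forced to take $s_i = 0$. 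Hence, for fixed $Z$, the objective equals $\min_{L}\sum_{i \notin Z}\|m_i - \ell_i\|_2^2$ over rank-at-most-$r$ matrices $L$; the rows $\ell_i$ with $i \in Z$ do not occur in the sum, so they may be set arbitrarily (for instance as projections of $m_i$ onto any fixed $r$-dimensional subspace) without violating the rank bound, and the constraint effectively becomes that the submatrix of $L$ with rows indexed by $\{1,\dots,n\}\setminus Z$ has rank at most $r$.

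Next I would invoke Proposition~\ref{prop:subspace}, applied to the submatrix of $M$ with rows indexed by $\{1,\dots,n\}\setminus Z$, to rewrite this inner minimum as $\min_{U}\sum_{i\notin Z}\|m_i-\proj_U m_i\|_2^2$, where $U$ ranges over $r$-dimensional linear subspaces of $\mathbb{R}^d$ (a subspace of smaller dimension is contained in one of dimension $r$, and enlarging $U$ only shrinks residuals, so fixing $\dim U = r$ is without loss of generality, matching the form in the statement). The optimum of \probPCAOut is therefore $\min_{|Z| \le k}\min_{\dim U = r}\sum_{i\notin Z}\|m_i-\proj_U m_i\|_2^2$, and since the two minimizations range over independent domains, I can exchange them to obtain $\min_{\dim U = r}\min_{|Z| \le k}\sum_{i\notin Z}\|m_i-\proj_U m_i\|_2^2$.

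It remains to evaluate the inner minimization over $Z$ for a fixed subspace $U$. Since all summands $\|m_i-\proj_U m_i\|_2^2$ are non-negative and choosing $Z$ deletes exactly the summands with indices in $Z$, the optimum is attained by letting $Z$ consist of the $k$ indices with the largest residual $\|m_i-\proj_U m_i\|_2^2$ (ties broken arbitrarily, and with $|Z| = k$ exactly, which is best precisely because the terms are non-negative). Reordering the rows of $M$ by descending residual, as in the statement, this choice is exactly the pair $(L_U, S_U)$: the matrix $M - L_U - S_U$ has zero rows in positions $1,\dots,k$ and rows $m_i - \proj_U m_i$ in positions $k+1,\dots,n$, so $\|M - L_U - S_U\|_F^2 = \sum_{i > k}\|m_i - \proj_U m_i\|_2^2$; moreover $S_U$ has at most $k$ non-zero rows and the non-zero rows of $L_U$ all lie in $U$, so $\rank L_U \le r$ and the pair is feasible. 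Combining these identities gives the claimed equality.

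I expect no serious obstacle here; the only point that needs a little care is the interaction of the rank constraint with the rows of $L$ that are eliminated from the objective — one must verify that these rows can always be filled in so that the global bound $\rank L \le r$ is preserved, which is immediate by placing them inside the same $r$-dimensional subspace used for the remaining rows. The degenerate cases $k \ge n$ (optimum zero) and $r \ge d$ (all points already lie in an $r$-dimensional subspace, no outliers needed) are trivial and can be dismissed in one line.
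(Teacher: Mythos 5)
Your proof is correct, and it is essentially the argument the paper implicitly intends: the paper states Proposition~\ref{prop:subspace_outliers} without proof, merely noting beforehand that for a fixed $r$-dimensional subspace $U$ the outliers become the $k$ farthest points, and your row-wise decomposition of $\|\cdot\|_F^2$, reduction to Proposition~\ref{prop:subspace} on the non-outlier submatrix, and exchange of the two minimizations is the natural formalization of that remark. You also correctly flag and dispose of the only delicate point (filling in the $Z$-indexed rows of $L$ without breaking $\rank L\le r$), so no gap remains.
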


So for a fixed $U$ we may determine $S_U$ easily and then solve the classical PCA for the matrix $(M - S_U)$. The intuition behind our algorithm is that the set of $k$ farthest points is the same for many subspaces, and solving PCA for $(M - S_U)$ treats all these subspaces. The crucial point is to bound the number of different matrices $S_U$ we have to consider. There is of course a trivial bound of $n^k$ since $S_U$ is always obtained by choosing $k$ rows of $M$.
But the number of different $S_U$ is also geometrically limited, and exploiting this we are able to obtain another bound of $n^{O(d^2)}$, resulting in the following theorem.

\begin{theorem}
    Solving \probPCAOut is reducible to solving
    $$\binom{n}{2}^{\min(rd, (d - r)d)} 2^{\Oh(d)} = n^{\Oh(d^2)}$$
    instances of PCA. This reduction can be computed in the number of operations over $\mathbb{R}$ bounded by the expression above.
    \label{thm:algorithm}
\end{theorem}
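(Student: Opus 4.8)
The plan is to replace the continuous minimization over $r$-dimensional subspaces in Proposition~\ref{prop:subspace_outliers} by a finite search over the cells of a suitable arrangement of bounded-degree algebraic surfaces, listing one representative subspace per cell with Proposition~\ref{prop:sample_points}. By Proposition~\ref{prop:subspace_outliers} it is enough to produce a family $\mathcal{U}$ of $r$-dimensional subspaces such that for at least one subspace $U^\star$ attaining the minimum there, some $U \in \mathcal{U}$ induces the same outlier matrix $S_{U^\star}$: then solving classical PCA on $M - S_U$ for every $U \in \mathcal{U}$ and returning the best resulting pair $(L,S)$ solves \probPCAOut, because PCA on $M - S_{U^\star}$ can only improve on $(L_{U^\star}, S_{U^\star})$ while every generated pair is feasible. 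So the real task is to build $\mathcal{U}$ rich enough yet of size $n^{\Oh(d^2)}$. The geometric point is that, as $U$ varies, the set of the $k$ points of $M$ farthest from $U$ is locally constant and can change only when $U$ crosses a locus on which two residuals $\|m_i - \proj_U m_i\|^2$ and $\|m_j - \proj_U m_j\|^2$ become equal, which is precisely the structure Proposition~\ref{prop:sample_points} exploits.

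To make those loci algebraic I would coordinatize the $r$-dimensional subspaces by an orthonormal basis, i.e.\ by a point $A$ of the Stiefel manifold $V = \{A \in \mathbb{R}^{r \times d} : AA^\top = I_r\}$, which is a real algebraic set in $\mathbb{R}^{rd}$ cut out by the single polynomial $Q = \sum_{p,q}((AA^\top)_{pq} - \delta_{pq})^2$ of degree $4$ and of real dimension at most $rd$. For $U$ the row span of $A$ one has $\proj_U m_i = A^\top A m_i$ and $\|m_i - \proj_U m_i\|^2 = \|m_i\|^2 - \|A m_i\|^2$, so each of the $\binom{n}{2}$ difference polynomials $P_{ij}(A) = \|A m_j\|^2 - \|A m_i\|^2 + \|m_i\|^2 - \|m_j\|^2$ has degree $2$ in the entries of $A$ and equals $\|m_i - \proj_U m_i\|^2 - \|m_j - \proj_U m_j\|^2$. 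On every cell of $\mathcal{P} = \{P_{ij}\}_{i<j}$ over $V$ the sign vector of the $P_{ij}$ is constant, hence so is the weak order of the residuals and therefore, up to ties, the set of the $k$ farthest points. Applying Proposition~\ref{prop:sample_points} with $s = \binom{n}{2}$, $b = \Oh(1)$ and $d' = \dim V \le rd$ produces a subspace in every cell using $\binom{n}{2}^{rd}\,2^{\Oh(d)}$ operations; forming $M - S_U$ and solving PCA for each such $U$, with only polynomial overhead per cell, gives the reduction. Coordinatizing instead by an orthonormal basis $B$ of the orthogonal complement $U^\perp$ --- a point of the Stiefel manifold in $\mathbb{R}^{(d-r)d}$, with $\proj_U m_i = m_i - B^\top B m_i$ and residual $\|B m_i\|^2$, still of degree $2$ --- gives the same statement with $rd$ replaced by $(d-r)d$; using whichever representation has the smaller ambient dimension yields the exponent $\min(rd, (d-r)d)$, and the whole count is at most $n^{\Oh(d^2)}$.

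For correctness one checks that $\mathcal{U}$ is rich enough. Let $U^\star$ attain the minimum in Proposition~\ref{prop:subspace_outliers} (we may assume $\dim U^\star = r$, as enlarging a subspace never increases residuals), let $C$ be the cell of $\mathcal{P}$ over $V$ containing the parameter of $U^\star$, and let $U'$ be the subspace sampled from $C$. Since $U'$ and $U^\star$ realize the same sign vector of all $P_{ij}$, they induce the same strict order among residuals, so the $k$-element set $\tilde S$ extracted for $U'$ is also a valid set of $k$ farthest points of $M$ from $U^\star$; when residuals are tied at the top-$k$ boundary every valid choice yields the same value $\sum_{i \notin \tilde S}\|m_i - \proj_{U^\star} m_i\|^2$, which by Proposition~\ref{prop:subspace_outliers} equals the optimum of \probPCAOut. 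Hence PCA on $M - \tilde S$ attains a value at most the optimum, and since every produced pair is feasible, the minimum over the sampled subspaces equals the optimum.

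The main obstacle, past the routine degree and dimension bookkeeping, is the two reductions that put Proposition~\ref{prop:sample_points} into play: encoding the combinatorial object ``the set of $k$ outliers'' as a bounded-degree sign arrangement, which forces a representation of subspaces in which $\proj_U$ is a \emph{polynomial}, not merely a rational, function of the parameters (hence orthonormal bases, or equivalently projection matrices, rather than arbitrary spanning sets); and verifying that the optimal outlier configuration is genuinely witnessed by one of the cell representatives, in particular in the degenerate case where $U^\star$ lies on one of the surfaces $P_{ij} = 0$ and the set of $k$ farthest points is not unique.
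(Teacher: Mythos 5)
Your proposal is correct and follows essentially the same approach as the paper: you parametrize $r$-dimensional subspaces by orthonormal bases of $U$ (or of $U^\perp$) as points on a real algebraic set of dimension at most $rd$ (resp.\ $(d-r)d$), form the $\binom{n}{2}$ degree-$2$ residual-comparison polynomials $P_{ij}$, invoke Proposition~\ref{prop:sample_points} to sample one representative per cell, and then solve classical PCA on the resulting candidate outlier sets, taking the better of the two parametrizations. Your extra remarks about degenerate cells with $P_{ij}=0$ (ties among residuals) are sound and slightly more explicit than the paper's, which handles the same point implicitly by noting that every produced pair $(S_C, L_V)$ is feasible, so the minimum over sampled cells can never undershoot the true optimum.
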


First, a note about the statement of Theorem~\ref{thm:algorithm}. Our algorithm relies on solving the classical PCA, and since only iterative algorithms for PCA and SVD exist, we could not claim that our algorithm solves \probPCAOut in some fixed number of operations. However, if we are only interested in solving the problem up to some constant precision, for example machine epsilon, then PCA is solvable in polynomial number of operations and so by Theorem~\ref{thm:algorithm}, \probPCAOut is solvable in $n^{\Oh(d^2)}$ operations.

\begin{proof}[Proof of Theorem~\ref{thm:algorithm}]

    We start with associating $r$-dimensional subspaces of $\mathbb{R}^d$ with points of a certain algebraic set. 
    Consider the matrix space $\mathbb{R}^{(d - r) \times d}$, and for an element $V \in \mathbb{R}^{(d - r) \times d}$, $V = \{v_{ij}\}_{i, j}$, the following polynomial conditions:
    \begin{gather*}
        Q^O_{i,j}(V) := \sum_{l = 1}^d v_{il} v_{jl} = 0, \text{ for }1 \le i < j \le (d - r),\\
        Q^N_{i}(V) := \left(\sum_{l = 1}^d v_{il}^2\right) - 1 = 0, \text{ for } 1 \le i \le (d - r),
    \end{gather*}
    where condition $Q^O_{i, j}(V) = 0$ requires rows $i$, $j$ of $V$ to be pairwise orthogonal and condition $Q^N_j(V) = 0$ requires row $j$ of $V$ to have length 1. We may write all these conditions as a single polynomial condition $Q(V) = 0$ by taking the sum of squares:
    $$Q(V) = \sum_{1 \le i < j \le (d - r)} (Q^O_{i, j}(V))^2 + \sum_{i = 1}^{d - r} (Q^N_i(V))^2.$$
    Thus $Q(V) = 0$ if and only if each of $Q^O_{i, j}(V)$ and each of $Q^N_i(V)$ is 0.
    
    Consider an algebraic set $W \subset \mathbb{R}^{(d - r) \times d}$ defined as the zero set of $Q(V)$. For any $V \in W$ with rows $v_1$, \ldots, $v_{d - r}$, consider the $r$-dimensional subspace $\comp(V) := \vspan(\{v_1, \cdots, v_{d - r}\})^\bot \subset \mathbb{R}^d$ which is the orthogonal complement of the span of the rows of $V$. Since $Q(V) = 0$, the rows of $V$ are pairwise orthogonal and are of length 1. Then, the dimension of $\comp(V)$ is $r$ and for any point $x \in \mathbb{R}^d$ the squared distance from $x$ to $\comp(V)$ is equal to
    $$\sum_{i = 1}^{d - r} (v_i \cdot x)^2 = ||V x^T||_F^2,$$
    assuming that $x$ is a row vector.

    Each $V \in W$ defines an $r$-dimensional subspace $\comp(V) \subset \mathbb{R}^d$ and each $r$-dimensional subspace $U \subset \mathbb{R}^d$ is of this form for some $V \in W$ since there exists an orthonormal basis of the orthogonal complement of $U$. Then we can reformulate Proposition \ref{prop:subspace_outliers} in terms of elements of $W$ as follows.
\iffull
    \begin{equation}
        \min_{\substack{L, S \in \mathbb{R}^{n \times d}\\\rank L \le r\\S \text{ has at most $k$ non-zero rows}}} ||M - L - S||_F^2 = \min_{V \in W} ||M - S_{\comp(V)} - L_{\comp(V)}||_F^2,
    \label{eq:objV}
    \end{equation}
\else
    \begin{multline}
        \min_{\substack{L, S \in \mathbb{R}^{n \times d}\\\rank L \le r\\S \text{ has at most $k$ non-zero rows}}} ||M - L - S||_F^2 \\= \min_{V \in W} ||M - S_{\comp(V)} - L_{\comp(V)}||_F^2,
    \label{eq:objV}
    \end{multline}
\fi
where $S_{\comp(V)}$ and $L_{\comp(V)}$ are defined in accordance with notation in Proposition \ref{prop:subspace_outliers}. Let $m_1$, \ldots, $m_n$ be the rows of the input matrix $M$; $S_{\comp(V)}$ has $k$ non-zero rows which are $k$ rows of $M$ with the largest value of $||m_i - \proj_{\comp(V)} m_i||_F^2 = ||V m_i^T||_F^2$, and $L_{\comp(V)}$ is the orthogonal projection of the rows of $(M - S_{\comp(V)})$ on $\comp(V)$. Denote $S_{\comp(V)}$ by $S_V$ and $L_{\comp(V)}$ by $L_V$.

    Now, consider the set of polynomials $\mathcal{P} = \{P_{i, j}\}_{1 \le i < j \le n}$ defined on $W$, where
    $$P_{i, j}(V) = ||V m_i^T||_F^2 - ||V m_j^T||_F^2.$$

    Consider the partition $\mathcal{C}$ of $W$ on cells over $\mathcal{P}$. For each cell $C$, the sign condition with respect to $\mathcal{P}$ is constant over $C$, meaning that for every pair $1 \le i < j \le n$, the sign of
    $$||V m_i^T||_F^2 - ||V m_j^T||_F^2$$
    is the same for all $V \in C$. So the relative order on $\{||V m_i^T||_F^2\}_{i=1}^n$ is also the same for all $V \in C$. Since $||V m_i^T||_F^2$ is exactly the squared distance from $m_i$ to $V$, $k$ rows of $M$ which are the farthest are also the same for all $V \in C$. Then $S_{V}$ is constant over $V \in C$, denote this common value as $S_C$. We can rewrite \eqref{eq:objV} as
    \iffull
    \begin{equation*}
        \min_{V \in W} ||M - S_{V} - L_{V}||_F^2 = \min_{C \in \mathcal{C}} \min_{V \in C} ||M - S_{V} - L_{V}||_F^2 = \min_{C \in \mathcal{C}} \min_{V \in C} ||(M - S_C) - L_{V}||_F^2.
    \end{equation*}
    \else
    \begin{multline*}
        \min_{V \in W} ||M - S_{V} - L_{V}||_F^2 \\= \min_{C \in \mathcal{C}} \min_{V \in C} ||M - S_{V} - L_{V}||_F^2 \\= \min_{C \in \mathcal{C}} \min_{V \in C} ||(M - S_C) - L_{V}||_F^2.
    \end{multline*}
    \fi
    Note that 
\iffull
    \begin{equation}
        \min_{C \in \mathcal{C}} \min_{V \in C} ||(M - S_C) - L_{V}||_F^2 = \min_{C \in \mathcal{C}} \min_{V \in W} ||(M - S_C) - L_{V}||_F^2,
        \label{eq:objC}
    \end{equation}
\else
    \begin{multline}
        \min_{C \in \mathcal{C}} \min_{V \in C} ||(M - S_C) - L_{V}||_F^2 \\= \min_{C \in \mathcal{C}} \min_{V \in W} ||(M - S_C) - L_{V}||_F^2,
        \label{eq:objC}
    \end{multline}
\fi
    as for any $C \in \mathcal{C}$, $\min_{V \in C} ||(M - S_C) - L_{V}||_F^2 \ge \min_{V \in W} ||(M - S_C) - L_{V}||_F^2$ since $C \subset W$. Also, any ($S_C$, $L_{V}$) in the right-hand side of \eqref{eq:objC} is still a valid choice of ($S$, $L$) for the original problem, and the optimum of the original problem is equal to the left-hand side of \eqref{eq:objC}. 

    For a fixed $C \in \mathcal{C}$ computing right-hand side of \eqref{eq:objC} is equivalent to solving an instance $(M - S_C, r)$ of the classical PCA by Proposition~\ref{prop:subspace}:
    \iffull
    $$\min_{V \in W} ||(M - S_C) - L_{V}||_F^2 = \min_{L \in \mathbb{R}^{n \times d},\, \rank(L) \le r} ||(M - S_C) - L||_F^2.$$
    \else
        \begin{multline*}
    \min_{V \in W} ||(M - S_C) - L_{V}||_F^2 \\= \min_{L \in \mathbb{R}^{n \times d},\, \rank(L) \le r} ||(M - S_C) - L||_F^2.$$
        \end{multline*}
    \fi
    By the reasoning above, the optimum of the original instance of \probPCAOut is reached on one of the constructed instances $\{(M - S_C, r)\}_{C \in \mathcal{C}}$ of PCA. A toy example of an algebraic set $W$ and its partitioning is shown in Figure~\ref{fig:solspace}.

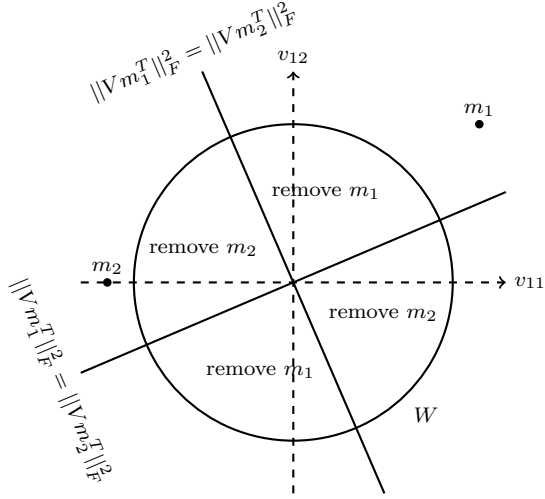
\begin{figure}[ht]
    \centering
    \scriptsize
    \begin{tikzpicture}[scale=.7]
            \draw[->,thick, dashed] (-4,0)--(4,0) node[right]{$v_{11}$};
            \draw[->,thick, dashed] (0,-4)--(0,4) node[above]{$v_{12}$};
            \draw[thick] (0, 0) circle (3cm);
            \node at (2.5, -2.5) () {$W$};
            \filldraw (3.5, 3) circle (2pt) node[above] {$m_1$};
            \filldraw (-3.5, 0) circle (2pt) node[above] {$m_2$};
            \draw[thick] (-4,-12/7)--(4,12/7) node[pos=0,below, rotate=-67]{$||V m_1^T||_F^2 = ||V m_2^T||_F^2$};
            \draw[thick] (-12/7,4)--(12/7,-4) node[pos=0,above, rotate=23]{$||V m_1^T||_F^2 = ||V m_2^T||_F^2$};
            \node[align=center] at (70:1.8) () {remove $m_1$};
            \node[align=center] at (250:1.8) () {remove $m_1$};
            \node[align=center] at (160:1.8) () {remove $m_2$};
            \node[align=center] at (340:1.8) () {remove $m_2$};
        \end{tikzpicture}
        \caption{An example with $d = 2$, $n = 2$, $r = 1$, $k = 1$. Two rows of the input matrix $M$ are represented as two points $m_1$ and $m_2$ on the plane. The same plane represents the choice of $1$-dimensional approximation subspace through the selection of a vector $V$ orthogonal to it. The algebraic set $W$ is the unit circle since length of $V$ must be 1. The diagonal lines mark the values of $V$ for which $m_1$ and $m_2$ are equidistant. They split $W$ into four one-dimensional and four zero-dimensional cells. For each of the one-dimensional cells it is shown in the corresponding sector which of the points is the outlier and hence is removed.}
    \label{fig:solspace}
\end{figure}
    
    Putting all together, our algorithm proceeds as follows.
    \begin{enumerate}
        \item Using the algorithm from Proposition~\ref{prop:sample_points}, obtain a point $V_C$ from each cell $C$ of $W$ over $\mathcal{P}$.
        \item For each $V_C$, compute the optimal $S_{V_C}$: select the $k$ rows of $M$ with the largest value of $||V_C m_i^T||_F^2$. Construct the instance $(M - S_{V_C}, r)$ of PCA.
        \item The solution to the original instance of \probPCAOut is the best solution among the solutions of all the constructed PCA instances.
    \end{enumerate}

    Since degrees of $Q$ and polynomials from $\mathcal{P}$ are at most 4, $|\mathcal{P}| = \binom{n}{2}$, and the real dimension of $W$ is at most
    $(d - r)d$,
    which is the dimension of $\mathbb{R}^{(d - r) \times d} \supset W$, the algorithm from Proposition~\ref{prop:sample_points} does at most
    $$t=\binom{n}{2}^{(d - r)d} 2^{\Oh(d)}$$
    operations and produces at most $t$ points $V_C$, and our algorithm produces one instance of PCA for each computed point.\footnote{As $W$ is restricted by $Q(V) = 0$, its dimension is actually smaller. It could be bounded more precisely as $(d - r)(d + r - 1)/2$, but we omit the calculation in order not to unnecessarily complicate the text.}

    We are also able to obtain a reduction to 
    $$\binom{n}{2}^{rd} 2^{\Oh(d)}$$
    instances of PCA by proceeding in the same manner for slightly different characterization of $r$-dimensional subspaces. Intuitively, now points on the algebraic set define the orthonormal basis of the subspace itself, and not of its orthogonal complement as in the previous part.

    Now the matrix space is $\mathbb{R}^{r \times d}$, the conditions that an element $V \in \mathbb{R}^{r \times d}$ defines an orthonormal basis of size $r$ are analogous:
    \begin{gather*}
        \bar{Q}^O_{i,j}(V) := \sum_{l = 1}^d v_{il} v_{jl} = 0, \text{ for }1 \le i < j \le r,\\
        \bar{Q}^N_{i}(V) := \left(\sum_{l = 1}^d v_{il}^2\right) - 1 = 0, \text{ for } 1 \le i \le r.
    \end{gather*}
    Again, we may write them as a single polynomial condition $\bar{Q}(V) = 0$ where
    $$\bar{Q}(V) = \sum_{1 \le i < j \le r} (\bar{Q}^O_{i, j}(V))^2 + \sum_{i = 1}^{r} (\bar{Q}^N_i(V))^2,\\$$

    Consider an algebraic set $\bar{W} \subset \mathbb{R}^{r \times d}$ defined as the set of zeroes of $\bar{Q}(V)$. Similarly, any $V \in \bar{W}$ defines an $r$-dimensional subspace $U \in \mathbb{R}^d$ which is the span of the rows of $V$. Since the rows of $V$ form an orthonormal basis of $U$, for any point $x \in \mathbb{R}^d$ the squared distance from $x$ to $U$ is equal to
    $$||x||_F^2 - \sum_{i = 1}^{r} (v_i \cdot x)^2 = ||x||_F^2 - ||V x^T||_F^2.$$

    The new distance formula leads to a slightly different set of polynomials $\bar{\mathcal{P}} = \{\bar{P_{i, j}}\}_{1 \le i < j \le n}$ on $W$, comparing the distance from $m_i$ and from $m_j$,
    $$\bar{P_{i, j}}(V) = (||m_i||_F^2 - ||V m_i^T||_F^2) - (||m_j||_F^2 - ||V m_j^T||_F^2).$$
    Again, the $k$ farthest points and the matrix $S_V$ are the same over any cell in the partition of $\bar{W}$ over $\bar{\mathcal{P}}$. So by the same reasoning as in the first part, it suffices to take a point $V$ from each cell, compute the outlier matrix $S_V$ and solve PCA for $(M - S_V, r)$.

    As we can choose the most effecient of the two subspace representations, we can reduce \probPCAOut to 
    $$\binom{n}{2}^{\min(rd, (d - r)d)} 2^{\Oh(d)}$$
    instances of PCA.\footnote{As with $W$, the dimension of $\bar{W}$ could be bounded more precisely as $r(2d - r - 1)/2$, and with these dimension bounds \probPCAOut reduces to $\binom{n}{2}^{\min(r(2d - r - 1)/2, (d - r)(d + r - 1)/2)} 2^{\Oh(d)}$ instances of PCA.}
\end{proof}

\iffull

\section{ETH lower bound}\label{sec:hard}
In this section we show that  we cannot avoid the dependence on $d$ in the exponent of the running time of a constant factor approximation algorithm for  \probPCAOut unless  
\emph{Exponential Time Hypothesis} (ETH) is false. 
Recall that ETH is the conjecture stated by Impagliazzo, Paturi and Zane~\cite{ImpagliazzoPZ01} in 2001 that for every integer $k\geq 3$, there is a positive constant $\delta_k$ such that the \textsc{$k$-Satisfiability} problem with $n$ variables and $m$ clauses cannot be solved in time $\Oh(2^{\delta_kn}\cdot(n+m)^{\Oh(1)})$. This means that 
\textsc{$k$-Satisfiability} cannot be solved
in subexponential in $n$ time.  

%

We need some additional notation and auxiliary algebraic results about rank of a perturbed matrix.
For a positive integer $r$,  we denote by $I_r$ the $r\times r$ identity matrix. 
For a $n\times m$ matrix $A=(a_{ij})$ and sets of indices $\mathcal{I}\subseteq \{1,\ldots,n\}$ and $\mathcal{J}\subseteq \{1,\ldots,m\}$, we use $A[\mathcal{I},\mathcal{J}]$ to denote the submatrix $A'=(a_{ij})_{i\in\mathcal{I},j\in\mathcal{J}}$ and we say that $A'$ is the \emph{$\mathcal{I}\times\mathcal{J}$-submatrix} of $A$. By $\sigma_{min}(A)$ we denote the smallest singular value of a matrix $A$.

Let $(M,r,k)$ be an instance of \probPCAOut where $M$ is an $n\times d$ matrix. 
We say that a pair of $n\times d$ matrices $(L,S)$ is a \emph{feasible solution} for   $(M,r,k)$  if $\rank(L)\leq r$ and $S$ has at most $k$ non-zero rows. A feasible solution $(L^*,S^*)$ is \emph{optimal} if $\|M-L^*-S^*\|_F^2$ is minimum, and we denote ${\sf Opt}(M,r,k)=\|M-L^*-S^*\|_F^2$. 

The following lemma is a well-known bound that follows from the early results of Weyl~\cite{Weyl1912}. 

\begin{lemma}[\cite{Weyl1912}]\label{lem:weyl}
Let $A$ and $\Delta$ be $r\times r$-matrices and assume that $A$ has full rank. Then if $\|\Delta\|_F<\sigma_{min}(A)$, then $\rank (A+\Delta)=r$. 
\end{lemma}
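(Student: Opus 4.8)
The plan is to derive the statement from the variational characterization of the smallest singular value, combined with the elementary comparison $\|\Delta\|_2 \le \|\Delta\|_F$ between the spectral and Frobenius norms. First I would recall that for a square matrix $A$ one has $\sigma_{min}(A) = \min_{x \in \mathbb{R}^r,\ \|x\|_2 = 1} \|Ax\|_2$, and that $A$ has full rank if and only if this quantity is strictly positive; by assumption $\sigma_{min}(A) > 0$ here.

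Next I would argue by contradiction: suppose $A + \Delta$ does not have full rank. Then there is a unit vector $x$ with $(A + \Delta)x = 0$, hence $Ax = -\Delta x$. On the one hand $\|Ax\|_2 \ge \sigma_{min}(A)$ by the characterization above. On the other hand $\|Ax\|_2 = \|\Delta x\|_2 \le \sigma_1(\Delta)\|x\|_2 = \sigma_1(\Delta)$, where $\sigma_1(\Delta)$ denotes the largest singular value of $\Delta$. Since $\sigma_1(\Delta)^2 \le \sum_i \sigma_i(\Delta)^2 = \|\Delta\|_F^2$, we obtain $\sigma_{min}(A) \le \|\Delta\|_F$, contradicting the hypothesis $\|\Delta\|_F < \sigma_{min}(A)$. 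Therefore $\rank(A + \Delta) = r$. Equivalently, the same bound follows in one line from Weyl's perturbation inequality for singular values, $|\sigma_i(A + \Delta) - \sigma_i(A)| \le \sigma_1(\Delta)$, applied to the index of the smallest singular value: $\sigma_{min}(A + \Delta) \ge \sigma_{min}(A) - \sigma_1(\Delta) \ge \sigma_{min}(A) - \|\Delta\|_F > 0$.

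There is no real obstacle here; the argument is a two-line linear-algebra fact. The only points that require (minimal) care are that the operator norm bounded by $\|\Delta\|_F$ is the spectral norm, and that full rank of a square matrix is equivalent to $\sigma_{min} > 0$ — both standard. If one wanted the write-up fully self-contained, the single step worth spelling out is the inequality $\|\Delta x\|_2 \le \sigma_1(\Delta)\|x\|_2 \le \|\Delta\|_F \|x\|_2$, which is immediate from the singular value decomposition of $\Delta$.
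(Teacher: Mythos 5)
Your proof is correct. The paper does not actually prove this lemma---it is stated with a citation to Weyl (1912) and no argument is given---so there is no in-paper proof to compare against. Your contradiction argument via the variational characterization $\sigma_{min}(A) = \min_{\|x\|_2 = 1}\|Ax\|_2$ together with the bound $\|\Delta x\|_2 \le \sigma_1(\Delta)\|x\|_2 \le \|\Delta\|_F\|x\|_2$ is the standard route, and the one-line alternative you mention via Weyl's singular-value perturbation inequality $|\sigma_i(A+\Delta) - \sigma_i(A)| \le \sigma_1(\Delta)$ is exactly the result the paper's citation alludes to; either version cleanly establishes the lemma.
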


Lemma~\ref{lem:weyl} implies the next lemma.

\begin{lemma}\label{lem:diag}
Let $A=(a_{ij})$ be a diagonal $r\times r$ matrix with non-zero diagonal elements and let $B=(b_{ij})$ be an $r\times r$ matrix such that for every $i,j\in\{1,\ldots,r\}$, $|b_{ij}|/|a_{ii}|< 1/r$. 
Then $\rank(A+B)=r$.  
\end{lemma}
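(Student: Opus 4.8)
The plan is to reduce the statement to Lemma \ref{lem:weyl} by exhibiting a suitable decomposition of $A+B$ into a full-rank matrix plus a small perturbation. The natural choice is to keep $A$ itself as the full-rank part and treat $B$ as the perturbation $\Delta$. Since $A$ is diagonal with non-zero diagonal entries, it has full rank, and its smallest singular value is $\sigma_{min}(A) = \min_{i} |a_{ii}|$. So by Lemma \ref{lem:weyl} it suffices to show that $\|B\|_F < \min_i |a_{ii}|$.

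First I would bound $\|B\|_F$ entrywise using the hypothesis. For each pair $i,j$ we have $|b_{ij}| < |a_{ii}|/r$, but to get a clean uniform bound it is cleaner to note $|b_{ij}| < |a_{jj}|/r \le (\max_j |a_{jj}|)/r$ — wait, that goes the wrong way. The hypothesis as stated is $|b_{ij}|/|a_{ii}| < 1/r$, i.e. $|b_{ij}| < |a_{ii}|/r$ for every $i,j$; in particular, fixing any column, $|b_{ij}| < |a_{ii}|/r \le M/r$ where $M = \max_i |a_{ii}|$. That bound is not obviously smaller than $\sigma_{min}(A) = \min_i |a_{ii}|$ unless the diagonal entries are comparable. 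So the correct reading of the hypothesis must be that for every $i,j$, $|b_{ij}|$ is small relative to the diagonal entry in its \emph{own row} $i$, and we should compare row by row. Concretely, for each row $i$, $\sum_{j=1}^r b_{ij}^2 < r \cdot (|a_{ii}|/r)^2 = a_{ii}^2/r$. The main obstacle is therefore purely bookkeeping: summing this over $i$ gives $\|B\|_F^2 < \frac{1}{r}\sum_{i=1}^r a_{ii}^2$, and since each term $a_{ii}^2 \ge \min_i a_{ii}^2 = \sigma_{min}(A)^2$ cannot be directly used (we need an upper bound, not a lower bound), we instead use $\frac{1}{r}\sum_{i=1}^r a_{ii}^2 \le \frac{1}{r}\cdot r \cdot \max_i a_{ii}^2 = \max_i a_{ii}^2$, which again only yields $\|B\|_F < \max_i |a_{ii}|$, not the needed $\|B\|_F < \min_i |a_{ii}|$.

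This shows that a global application of Lemma \ref{lem:weyl} to the pair $(A,B)$ is too lossy, and the real argument must be more local: I would instead apply a diagonal rescaling. Let $D = \mathrm{diag}(a_{11},\dots,a_{rr})$, so $A = D$ and write $A + B = D(I_r + D^{-1}B)$. Since $D$ is invertible, $\rank(A+B) = \rank(I_r + D^{-1}B)$. Now the $(i,j)$ entry of $D^{-1}B$ is $b_{ij}/a_{ii}$, which by hypothesis has absolute value $< 1/r$, so $\|D^{-1}B\|_F^2 < r^2 \cdot (1/r)^2 = 1$, giving $\|D^{-1}B\|_F < 1 = \sigma_{min}(I_r)$. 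Applying Lemma \ref{lem:weyl} with $A := I_r$ and $\Delta := D^{-1}B$ yields $\rank(I_r + D^{-1}B) = r$, hence $\rank(A+B) = r$, as claimed. The only slightly delicate point is the strictness of the inequality $\|D^{-1}B\|_F < 1$: since there are exactly $r^2$ entries each of absolute value strictly less than $1/r$, the sum of squares is strictly less than $1$ (even a single strict inequality among the entries would suffice, but here all of them are strict), so Lemma \ref{lem:weyl}'s hypothesis $\|\Delta\|_F < \sigma_{min}(A)$ is met.
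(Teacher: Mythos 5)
Your final argument is correct and coincides with the paper's proof: the paper also reduces $\rank(A+B)$ to $\rank(I_r+\Delta)$ with $\delta_{ij}=b_{ij}/a_{ii}$, which is exactly your $D^{-1}B$, bounds $\|\Delta\|_F<1=\sigma_{min}(I_r)$, and invokes Lemma~\ref{lem:weyl}. The exploratory detour through the global application of Weyl to $(A,B)$ is just scaffolding you correctly discard; the substance matches.
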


\begin{proof}
Observe that $\rank(A+B)=\rank(I_r+\Delta)$ where $\Delta=(\delta_{ij})$ is an $r\times r$ matrix with $\delta_{ij}=b_{ij}/a_{ii}$ for $i,j\in\{1,\ldots,r\}$. Then
\[\|\Delta\|_F=\sqrt{\sum_{i,j=1}^r \delta_{ij}^2}= \sqrt{\sum_{i,j=1}^r \Big(\frac{b_{ij}}{a_{ii}}\Big)^2}<\sqrt{\sum_{i,j=1}^r \frac{1}{r^2}} \leq 1=\sigma_{min}(I_r)\]
and $\rank(A+B)=\rank(I_r+\Delta)=r$  by Lemma~\ref{lem:weyl}.
\end{proof}

We also need the following folklore observation about the norm of the inverse of a perturbed matrix; we provide a proof for completeness.

\begin{lemma}\label{lem:perturb}
Let $\Delta$ be an $r\times r$-matrix with $\|\Delta\|_F\leq 1/2$. Then $\rank(I_r+\Delta)=r$ and 
\begin{equation}\label{eq:perturb}
\|(I_r+\Delta)^{-1}\|_F\leq  \sqrt{r}+1
\end{equation}
\end{lemma}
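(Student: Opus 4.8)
The plan is to prove the rank statement first and then the norm bound, using Lemma~\ref{lem:weyl} for the former and a Neumann-series (geometric series) argument for the latter. For the rank claim: apply Lemma~\ref{lem:weyl} with $A = I_r$ and the given $\Delta$. Since $\sigma_{min}(I_r) = 1$ and $\|\Delta\|_F \le 1/2 < 1 = \sigma_{min}(I_r)$, Lemma~\ref{lem:weyl} immediately gives $\rank(I_r + \Delta) = r$, so in particular $I_r + \Delta$ is invertible.

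For the norm bound, I would expand $(I_r+\Delta)^{-1}$ as a Neumann series $\sum_{j=0}^{\infty} (-\Delta)^j$, which converges because $\|\Delta\|_F \le 1/2 < 1$ (using submultiplicativity of the Frobenius norm, $\|\Delta^j\|_F \le \|\Delta\|_F^j$). Then
\[
\|(I_r+\Delta)^{-1}\|_F \le \sum_{j=0}^{\infty} \|\Delta\|_F^j = \frac{1}{1 - \|\Delta\|_F} \le \frac{1}{1 - 1/2} = 2.
\]
Since $\sqrt{r} + 1 \ge 2$ for every $r \ge 1$, this already gives \eqref{eq:perturb}; in fact it gives the stronger bound $2$. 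Alternatively, if one wants to avoid series manipulations entirely, write $(I_r+\Delta)^{-1} = I_r - \Delta(I_r+\Delta)^{-1}$, take Frobenius norms, use the triangle inequality and submultiplicativity to get $\|(I_r+\Delta)^{-1}\|_F \le \sqrt{r} + \|\Delta\|_F \|(I_r+\Delta)^{-1}\|_F$, and solve for $\|(I_r+\Delta)^{-1}\|_F \le \sqrt{r}/(1 - \|\Delta\|_F) \le 2\sqrt{r}$; this is weaker than the series bound but still within the claimed $\sqrt{r}+1$ only after noting $\sqrt r \le 2\sqrt r$ does not suffice — so I would prefer the first route, or use the decomposition $(I_r+\Delta)^{-1} = I_r - \Delta + \Delta^2(I_r+\Delta)^{-1}$ to peel off the identity term and bound the remainder, matching the $\sqrt r + 1$ shape of the stated inequality.

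There is essentially no obstacle here: the only mild care needed is to justify convergence of the Neumann series in the Frobenius norm (submultiplicativity $\|AB\|_F \le \|A\|_F\|B\|_F$ holds) and to make sure the final numerical comparison $2 \le \sqrt r + 1$ is noted so that the displayed form \eqref{eq:perturb} is literally what we prove. The lemma is a routine ``folklore'' fact, and the proof is three or four lines.
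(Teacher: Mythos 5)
The rank half of your argument is fine and identical to the paper's: apply Lemma~\ref{lem:weyl} with $A=I_r$, noting $\sigma_{\min}(I_r)=1 > 1/2 \ge \|\Delta\|_F$.

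The norm bound, however, has a concrete error. You invoke ``submultiplicativity, $\|\Delta^j\|_F\le\|\Delta\|_F^j$'' and then bound the whole Neumann series term by term, concluding $\|(I_r+\Delta)^{-1}\|_F\le\sum_{j\ge 0}\|\Delta\|_F^j\le 2$. But submultiplicativity of the Frobenius norm gives $\|\Delta^j\|_F\le\|\Delta\|_F^j$ only for $j\ge 1$; at $j=0$ we have $\|\Delta^0\|_F=\|I_r\|_F=\sqrt{r}$, which is \emph{not} bounded by $\|\Delta\|_F^0=1$ once $r>1$. Your claimed ``stronger bound $2$'' is therefore false, and you can see this directly by taking $\Delta=0$: then $(I_r+\Delta)^{-1}=I_r$ has Frobenius norm $\sqrt{r}$, which exceeds $2$ for $r\ge 5$. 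The fix is exactly what the paper does: peel off the identity term, giving
\[
\|(I_r+\Delta)^{-1}\|_F\le\|I_r\|_F+\sum_{j\ge 1}\|\Delta\|_F^j=\sqrt{r}-1+\frac{1}{1-\|\Delta\|_F}\le\sqrt{r}+1,
\]
where the last step uses $\|\Delta\|_F\le 1/2$. You actually gestured at this (``peel off the identity term, matching the $\sqrt{r}+1$ shape''), so the right instinct was present, but the primary computation you wrote down is wrong and needs this correction.
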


\begin{proof}
Since $\sigma_{min}(I_r)=1>\|\Delta\|_F$, by Lemma~\ref{lem:weyl}, $\rank(I_r+\Delta)=r$. Then 
$$(I_r+\Delta)^{-1}=I_r-\Delta+\Delta^2-\Delta^3+\cdots$$ 
and we have that
$$\|(I_r+\Delta)^{-1}\|_F\leq \|I_r\|_F+\sum_{i=1}^{+\infty}\|\Delta\|_F^i= \sqrt{r}-1+\frac{1}{1-\|\Delta\|_F}\leq \sqrt{r}+1,$$
where the last inequality holds because $\|\Delta\|_F<1/2$.
\end{proof}

Now we are ready to prove the main theorem. 

\begin{theorem}\label{thm:hard-dim}
For any $\omega \geq 1$, there is no $\omega$-approximation algorithm for \probPCAOut with running time $f(d)\cdot N^{o(d)}$ for any computable function $f$ unless ETH fails, where $N$ is the bitsize of the input matrix $M$.
\end{theorem}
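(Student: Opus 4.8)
The plan is to prove Theorem~\ref{thm:hard-dim} by a gap-preserving parameterized reduction from \probClique. Recall that, under ETH, there is no algorithm deciding whether a graph on $N_v$ vertices has a $k$-clique in time $f(k)N_v^{o(k)}$ for any computable $f$; this is the standard companion of the \classW1-hardness of \probClique. Given $(G,k)$, I would build in polynomial time an instance $(M,r,k')$ of \probPCAOut whose ambient dimension $d$ is \emph{linear} in $k$ and whose bit-size $N$ is polynomial in $N_v$, such that (i) if $G$ has a $k$-clique then ${\sf Opt}(M,r,k')=0$, and (ii) if $G$ has no $k$-clique then ${\sf Opt}(M,r,k')\ge\varepsilon$ for an explicit $\varepsilon>0$ that is not too small. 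The linearity $d=\Oh(k)$ is the crucial feature: since an $\omega$-approximation algorithm for a minimization problem returns a feasible solution of value in $[{\sf Opt},\,\omega\cdot{\sf Opt}]$, the $0$-versus-$\varepsilon$ gap lets \emph{any} such algorithm, for every constant $\omega\ge1$, distinguish the two cases; running it on the reduced instance in the hypothetical time $f(d)N^{o(d)}$ gives $f(\Oh(k))\cdot(\mathrm{poly}(N_v))^{o(k)}=f'(k)N_v^{o(k)}$ for \probClique, contradicting ETH. (This simultaneously yields \classNP-hardness of \probPCAOut once $d$ is part of the input.)

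For the construction I would encode the selection of the $k$ clique vertices together with the $\binom{k}{2}$ required edges into the rows of $M$, grouping the coordinates into $\Oh(k)$ blocks — a constant number of coordinates per "clique slot" — and choosing the target rank $r$ to be $d$ minus a small explicit amount, so that a feasible choice of $n-k'$ non-outlier rows corresponds exactly to picking one vertex per slot plus the edges among the picked vertices, all of whose encoding vectors lie in a common prescribed $r$-dimensional subspace, while the rows forced out are precisely the ones inconsistent with a single clique. Completeness (i) is then immediate: a $k$-clique yields an explicit outlier set $S_V$ of $k'$ rows whose removal leaves everything in that fixed subspace, so the objective is exactly $0$. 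It is convenient to start from \textsc{Multicolored Clique} (same ETH lower bound), since the "one vertex per colour class" structure maps cleanly onto the block layout and keeps the number of blocks — hence $d$ — linear in $k$ while still forcing the desired combinatorial structure.

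The hard part will be soundness (ii): showing that with no $k$-clique, \emph{every} admissible set of $k'$ zeroed-out rows leaves the remaining rows bounded away, in squared Frobenius norm, from any $r$-dimensional subspace. This is where Lemmas~\ref{lem:weyl}, \ref{lem:diag} and~\ref{lem:perturb} are used. If ${\sf Opt}(M,r,k')<\varepsilon$, there is a feasible $(L,S)$ with $\|M-L-S\|_F^2<\varepsilon$; on the set $Z$ of at least $n-k'$ rows outside the support of $S$ we get $\rank(L[Z,\cdot])\le r$ and $\|M[Z,\cdot]-L[Z,\cdot]\|_F^2<\varepsilon$, so $M[Z,\cdot]$ is within Frobenius distance $\sqrt{\varepsilon}$ of a rank-$\le r$ matrix. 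I would argue from the gadget that, absent a $k$-clique, any such $Z$ retains an $(r{+}1)\times(r{+}1)$ submatrix of $M$ that is robustly full rank: after a harmless diagonal rescaling it becomes diagonally dominant in the sense of Lemma~\ref{lem:diag}, hence has smallest singular value bounded below by an explicit constant, with Lemma~\ref{lem:perturb} controlling the relevant inverse blocks so the bound is uniform over instances. By Lemma~\ref{lem:weyl}, a perturbation of Frobenius norm below that threshold cannot drop its rank, so $M[Z,\cdot]$ cannot be $\sqrt{\varepsilon}$-close to rank $\le r$ once $\varepsilon$ is set below this threshold; thus ${\sf Opt}(M,r,k')\ge\varepsilon$. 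The two points demanding the most care are (a) forcing exactly the "one-vertex-per-slot-plus-its-edges" structure while still spending only $\Oh(k)$ coordinates, and (b) making the robust-full-rank-submatrix claim hold for \emph{all} admissible outlier sets at once, which is precisely what pins down the numeric coefficients to be placed in $M$.
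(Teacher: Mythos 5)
Your high-level plan matches the paper's: reduce from \textsc{Multicolored Clique}, set the ambient dimension $d$ to $\Theta(r)$ (the paper uses $d=2r$), keep $\binom r2$ edge rows and declare the rest outliers, and use the Weyl-type perturbation Lemmas~\ref{lem:weyl}--\ref{lem:perturb} to show soundness by exhibiting a robustly-full-rank $(r{+}1)\times(r{+}1)$ submatrix whenever the retained edges are inconsistent with a single clique. That much is the right route.

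The genuine gap is the completeness claim \emph{(i)}: you assert that on yes-instances ${\sf Opt}(M,r,k')=0$, i.e.\ all retained rows lie exactly in a common $r$-dimensional subspace, and you deduce from this a clean $0$-vs-$\varepsilon$ gap that handles every $\omega$ simultaneously. The paper's construction does \emph{not}, and essentially cannot, achieve this, and the discrepancy is not a detail one can patch later. The paper needs \emph{anchor} blocks (many copies of $(aI_r\mid aI_r)$) to pin down the fitting subspace so that soundness goes through; but the fitting subspace on a yes-instance depends on the chosen clique vertices $(i_1,\dots,i_r)$, while the anchor rows are fixed. Consequently the anchor rows cannot lie exactly in every candidate clique subspace, and the paper's completeness direction only gives $\|M-L-S\|_F^2=m(i_1^2+\cdots+i_r^2)\le D$ with $D=rmn^2>0$. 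If you try to force exact fits by using $L_i=A_i$ on the anchor blocks, the retained edge rows must then lie in the fixed rowspace of $(aI_r\mid aI_r)$, which destroys the encoding of the vertex indices; conversely, dropping anchors leaves soundness unprotected, since $\binom r2$ arbitrary edge rows can happen to span only $r$ dimensions. The paper resolves this tension by switching to a $D$-versus-$\omega D$ gap: the yes-side optimum is merely bounded by $D$, the no-side is bounded below $\omega D$, and to keep the Gaussian-elimination/perturbation arguments tight across that looser gap the magnitudes $a=4(r{+}1)^2mn^2\omega$ and $c=9a$ must themselves scale with $\omega$ (so the reduction is different for each $\omega$, which the theorem quantifier permits). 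Your proof as written needs either (a) a construction actually certifying exact fits on yes-instances while still forcing the clique structure --- you give no such construction and I believe none of the standard anchor-based gadgets yields one --- or (b) the paper's weaker completeness bound together with $\omega$-dependent parameter scaling, at which point you would recover the paper's proof rather than a simpler alternative.
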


\begin{proof}
Let $\omega \geq 1$. 
We reduce from the \textsc{Multicolored Clique} problem:

\defparproblema{\textsc{Multicolored Clique}}{A graph $G$ with a partition $V_1,\ldots,V_r$ of the vertex set.}{$r$}{Decide whether there is a clique $X$ in $G$ with $|X\cap V_i|=1$.}

\noindent
This problem cannot be solved in time $f(r)\cdot |V(G)|^{o(r)}$ for any computable function $f$ unless ETH fails~\cite{CyganFKLMPPS15,LokshtanovMS11}).
 
Let $(G,V_1,\ldots,V_r)$ be an instance of \textsc{Multicolored Clique}. We assume without loss of generality that for $i\in\{1,\ldots,r\}$, $|V_i|=n$  (otherwise, we  can add dummy isolated vertices to insure the property; clearly, the claim that the problem cannot be solved in time $f(r)\cdot |V(G)|^{o(r)}$ up to ETH remains correct) and each $V_i$ is an independent set.  
We denote $v_1^i,\ldots,v_n^i$ the vertices of $V_i$ for $i\in\{1,\ldots,r\}$ and let $m=|E(G)|$. We also assume that $r\geq 4$.

We set $a=4(r+1)^2mn^2\omega$ and $c=9a$.
We define $m\times r$ matrices $P=(p_{ij})$ and $Q=(q_{ij})$ whose rows are indexed by the edges of $G$ as follows: for every $e=v_i^sv_j^t\in E(G)$, 
\begin{itemize}
\item set  $p_{e s}=c(a-i)$, $p_{e t}=c(a-j)$,
\item set $q_{e s}=q_{e t}=ca$,
\item for $h\in\{1,\ldots,r\}$ such that $h\neq i,j$, set $p_{e i}=q_{e j}=0$.
\end{itemize}
For $e\in E(G)$, $p_e$ and $q_e$ denotes the $e$-th row of $P$ and $Q$, respectively.
We define the $r\times r$ matrix $A=aI_r$.
Let $a_1,\ldots,a_r$ be the rows of $A$.
We construct $2m$ copies of $A$ denoted by $A_i,A'_i$ for $i\in\{1,\ldots,m\}$.
We construct an $(r+1)m\times 2r$ matrix $M$ using $P$, $Q$ and $A_i,A_i'$ for $i\in\{1,\ldots,m\}$ as blocks:
\begin{equation}\label{eq:M}
M=
\left(
\begin{array}{c|c}
A_1 & A_1'\\
\hline
\vdots & \vdots \\
\hline
A_m & A_m'\\
\hline
P & Q
\end{array}
\right).
\end{equation}
To simplify notation, we index the rows of $M$ corresponding to $(P\mid Q)$ by the edges of $G$ and use integers for other indices. We follow the same convention of the matrices of similar structure that are considered further. 
We define $k=m-\binom{r}{2}$ and 
set $D=rmn^2$ and $D'=D\omega$.  
Note that the dimension $d=2r$
Observe also that $a$ and $c$ are chosen in such a way that $c>>a>>D'$ and this is crucial for our reduction.
The main property of the constructed instance of \probPCAOut is stated in the following claim.

%

\begin{claim}\label{cl:crucial}
If $(G,V_1,\ldots,V_r)$ is a yes-instance of \textsc{Multicolored Clique}, then ${\sf Opt}(M,r,k)\leq D$, and if there is a feasible solution $(L,S)$ for $(M,r,k)$ with 
$\|M-L-S\|_F^2\leq \omega D$, then $(G,V_1,\ldots,V_r)$ is a yes-instance of \textsc{Multicolored Clique}. 
\end{claim}

Suppose that $(G,V_1,\ldots,V_r)$ is a yes-instance of \textsc{Multicolored Clique}, that is, there is a clique $X$ of $G$ such that $|X\cap V_i|=1$ for $i\in\{1,\ldots,r\}$. Let 
$X=\{v_{i_1}^1,\ldots,v_{i_r}^r\}$ and $R=\{v_{i_s}^sv_{i_t}^t\mid 1\leq s<t\leq r\}$, that is, $R=E(G[X])$. We show that ${\sf Opt}(M,r,k)\leq D$. For this, we construct a feasible solution $(L,S)$ such that $\|M-L-S\|_F^2\leq D$.

We define the $m\times r$ matrices $P^*$ and $Q^*$ by setting the elements of $P$ and $Q$ respectively that are in the rows for $e\in E(G)\setminus R$ to be zero and the other elements are the same as in $P$ and $Q$, that is, for $e\in R$, the rows of $P^*$ and $Q^*$ are $p_e$ and $q_e$ respectively and other rows are zero-rows. 
We define an $r\times r$ matrix $B$ as follows:
\[
B=
\begin{pmatrix}
a-i_1& 0 & \ldots& 0\\
0 & a-i_2& \ldots& 0\\
\vdots & \vdots & \ddots & \vdots\\ 
0 &0 &\ldots & a-i_r
\end{pmatrix},
\]
that is, the diagonal elements of $B$ are $a-i_1,\ldots,a-i_r$ and the other elements are zeros. 
Denote by $b_1,\ldots,b_r$ the rows of $B$.
We construct $m$ copies $B_1,\ldots,B_m$  of $B$ and 
 the $(r+1)m\times 2k$ matrix $L$ using $P^*$, $Q^*$ and $B_i,A_i'$ for $i\in\{1,\ldots,m\}$ as blocks:
\begin{equation}\label{eq:S}
L=
\left(
\begin{array}{c|c}
B_1 & A_1'\\
\hline
\vdots & \vdots \\
\hline
B_m & A_m'\\
\hline
P^* & Q^*
\end{array}
\right).
\end{equation}
It is straightforward to verify that $\rank(L)\leq r$. Indeed, the rank of the each submatrix $(B_i\mid A_i')$ is $r$ as only diagonal elements of $B_i$ and $A_i'$ are non-zero. Also for each $e=v_{i_s}^sv_{i_t}^t\in R$, we have that $p_e=c(b_{i_s}+b_{i_t})$ and $q_e=c(a_{i_s}+a_{i_t})$, i.e., each row of $L$ indexed by $e\in R$ is a linear combination of two rows of  $(B_1\mid A_1')$.

Let $\mathbb{0}$ be $r\times r$ zero matrix. We construct $2m$ copies $\mathbb{0}_1,\ldots,\mathbb{0}_m$ and $\mathbb{0}_1',\ldots,\mathbb{0}_m'$ of $\mathbb{0}$ and define
\begin{equation}\label{eq:L}
S=
\left(
\begin{array}{c|c}
\mathbb{0}_1 & \mathbb{0}_1'\\
\hline
\vdots & \vdots \\
\hline
\mathbb{0}_m & \mathbb{0}_m'\\
\hline
P-P^* & Q-Q^*
\end{array}
\right).
\end{equation}
Clearly, this matrix has at most $k$ non-zero rows that are indexed by $e\in R$.

Combining (\ref{eq:M})--(\ref{eq:L}), we have that
\begin{align*}
M-S-L&\\
=&\left(
\begin{array}{c|c}
A_1 & A_1'\\
\hline
\vdots & \vdots \\
\hline
A_m & A_m'\\
\hline
P & Q
\end{array}
\right)
-
\left(
\begin{array}{c|c}
B_1 & A_1'\\
\hline
\vdots & \vdots \\
\hline
B_m & A_m'\\
\hline
P^* & Q^*
\end{array}
\right)
-
\left(
\begin{array}{c|c}
\mathbb{0}_1 & \mathbb{0}_1'\\
\hline
\vdots & \vdots \\
\hline
\mathbb{0}_m & \mathbb{0}_m'\\
\hline
P-P^* & Q-Q^*
\end{array}
\right)\\
=&
\left(
\begin{array}{c|c}
A_1-B_1 & \mathbb{0}\\
\hline
\vdots & \vdots \\
\hline
A_m-B_m & \mathbb{0}\\
\hline
\mathbb{0}' & \mathbb{0}'
\end{array}
\right),
\end{align*}
where $\mathbb{0}'$ is the $m\times r$ zero matrix,
and
\[\|M-S-L\|_F^2=m\|A-B\|_F^2=m(i_1^2+\ldots+i_r^2)\leq rmn^2=D.\]

We conclude that $(L,S)$ is a feasible solution for the considered instance $(M,r,k)$ of  \probPCAOut with $\|M-S-L\|_F^2\leq D$. Therefore, ${\sf Opt}(M,r,k)\leq D$.

\medskip
Suppose now that $(L,S)$ is a feasible solution for  $(M,r,k)$ of  \probPCAOut with $\|M-S-L\|_F^2\leq \omega D=D'$. We prove that $(G,V_1,\ldots,V_r)$ is a yes-instance of \textsc{Multicolored Clique}. 

 Recall that $S$ has at most $k=m-\binom{r}{2}$ non-zero rows, Hence, there is a set $R\subseteq E(G)$ with $|R|=m-k=\binom{r}{2}$ such that the rows of $S$ indexed by $e\in R$ are zero-rows. We claim that the edges of $R$ form the set of edges of a complete graph. More formally, we show the following.

\begin{claim}\label{cl:clique}
There are $i_1,\ldots,i_r\in\{1,\ldots,n\}$ such that $R=\{v_{i_s}^sv_{i_t}^t\mid 1\leq s<t\leq r\}$.
\end{claim}

To prove Claim~\ref{cl:clique}, we need some auxiliary results. 
Let 
$$
L=\left(
\begin{array}{c|c}
L_1 & L_1'\\
\hline
\vdots & \vdots \\
\hline
L_m & L_m'\\
\hline
L_{m+1} & L_{m+1}'  
\end{array}
\right)
\text{ and }
S=\left(
\begin{array}{c|c}
S_1 & S_1'\\
\hline
\vdots & \vdots \\
\hline
S_m & S_m'\\
\hline
S_{m+1} & S_{m+1}'  
\end{array}
\right), 
$$
where each $L_i,L_i',S_i,S_i'$ is an $r\times r$ submatrix of $L$ and $S$ respectively for $i\in\{1,\ldots,m\}$. 
Since $S$ has at most $k$ non-zero rows and $k<m$, there is $i\in\{1,\ldots,m\}$ such that $(S_i,S_i')=(\mathbb{0}\mid \mathbb{0})$. We assume without loss of generality that $i=1$. 
Let $P_R$ and $Q_R$ be the $R\times \{1,\ldots,r\}$-submatrices of $P$ and $Q$ respectively, and denote by $X$ and $Y$ the $R\times \{1,\ldots,r\}$-submatrices of $L_{m+1}$ and $L_{m+1}'$ respectively. Let
$$
M^*=\left(
\begin{array}{c|c}
A_1 & A_1'\\
\hline
P_R& Q_R  
\end{array}
\right) 
\text{ and }
L^*=\left(
\begin{array}{c|c}
L_1 & L_1'\\
\hline
X & X' 
\end{array}
\right).
$$
We have that $(i)$ $\rank(L^*)\leq\rank(L)\leq r$ and $(ii)$ $\|M^*-L^*\|_F^2\leq \|M-L-S\|_F^2\leq D'$.
Recall that $A_1=A_1'=A=aI_r$. Therefore, we have that $\rank(A_1')=r$ and $\sigma_{min}(A_1')=a>\sqrt{D'}$. Since 
$A_1'-L_1'$ is a submatrix of  $M^*-L^*$ and by statement $(ii)$, we have that $\|A_1-L_1'\|_F\leq \|M^*-L^*\|_F\leq \sqrt{D'}<\sigma_{min}(A_1')$. 
Then, by Lemma~\ref{lem:weyl} (substitute $A=A_1'$ and $\Delta=A_1'-L_1'$ in Lemma~\ref{lem:weyl}), we conclude that $\rank(L_1')=\rank(-L_1')=\rank(A_1')=r$. Therefore, by  $(i)$,  $\rank(L^*)=r$ and the first $r$ rows of $L^*$ are linearly independent and form a row basis of $L^*$.  In particular, the rows of $L^*$ indexed by elements in $R$ are linear combinations of the rows of this basis.

It is convenient for us to switch from the basis formed by the rows of $(L_1\mid L_1')$ to a more specific basis. Since $\rank(L_1')=r=\rank(A)$, there is a unique $r\times r$ matrix $\Lambda$ such that $\Lambda L_1'=A$. This implies that $\Lambda (L_1\mid L_1')=(Y\mid A)$ for some $r\times r$ matrix $Y$. In other words, the rows of $Z=(Y\mid A)$ are linear combinations of the rows of  $(L_1\mid L_1')$.  Since $\rank(A)=r$, 
we have that the rows $L^*$ are linear combinations of the rows of $Z$.
Then for 
\[\hat{L}=
\left(
\begin{array}{c|c}
Y & A\\
\hline
X & X' 
\end{array}
\right),
\]
it holds that $\rank(\hat{L})=r$.

By the definition of $\Lambda$, we have that $\Lambda=A(L_1')^{-1}=aI_r(L_1')^{-1}=a(L_1')^{-1}$. Notice that $a(L_1')^{-1}=(\frac{1}{a} L_1')^{-1}=(I_r+\frac{1}{a} L_1'-I_r)^{-1}$. 
Let $\Theta=\frac{1}{a}(L_1'-A_1)$. Since $\|A_1-L_1'\|_F\leq \|M^*-L^*\|_F\leq \sqrt{D'}$, $\|\Theta\|_F\leq \frac{1}{a}\sqrt{D'}\leq 1/2$.  
We have that $\Lambda=AL_1'^{-1}=(I_r+\Theta)^{-1}$ and $\|\Lambda\|\leq \sqrt{r}+1$ by Lemma~\ref{lem:perturb}.
It holds that 
\[Y=\Lambda L_1=\Lambda(L_1'+(L_1-L_1'))=\Lambda L_1'+\Lambda(L_1-L_1')=A+\Lambda(L_1-L_1'),\]
and for $\Delta=\Lambda(L_1-L_1')$,
\[\|\Delta\|_F\leq \|\Lambda\|\|L_1-L_1'\|_F=\|\Lambda\|(\|L_1-A\|_F+\|L_1'-A\|_F)\leq 2(\sqrt{r}+1)\sqrt{D'}.\]

We  obtain that for every $e\in E(G)$, the rows of $L^*$ indexed by $e$ is a linear combination of the rows of $Z=(A+\Delta\mid A)$ where $\|\Delta\|_F\leq 2(\sqrt{r}+1)\sqrt{D'}$.
This property is crucial for the proof of Claim~\ref{cl:clique}.

Let $\Delta=(\delta_{ij})$. Note that we have that for every $i,j\in\{1,\ldots,r\}$, $|\delta_{ij}|\leq 2(\sqrt{r}+1)\sqrt{D'}$.
Denote by $\Xi=X-P_R$ and $\Xi'=X'-Q_R$. Let $\Xi=(\xi_{ej})$ and $\Xi'=(\xi_{ej}')$. Since $\|P_R-X\|_F\leq \|M^*-L^*\|\leq \sqrt{D'}$ and
$\|Q_R-Y\|_F\leq \|M^*-L^*\|\leq \sqrt{D'}$, $|\xi_{ej}|\leq \sqrt{D'}$ and $|\xi_{ej}'|\leq\sqrt{D'}$ 
for $e\in R $ and $j\in\{1,\ldots,r\}$.

Observe that now we can write that 
\[\hat{L}=
\left(
\begin{array}{c|c}
A+\Delta & A\\
\hline
P_R+\Xi & Q_R+\Xi' 
\end{array}
\right).
\]

To prove Claim~\ref{cl:clique}, we first show the following claim.

\begin{claim}\label{cl:first}
If $v_s^iv_t^j,v_{s'}^{i'}v_{t'}^{j'}\in R$ for some $i,j,i',j'\in\{1,\ldots,r\}$ and $s,s',t,t'\in\{1,\ldots,n\}$, then $i\neq i'$ or $j\neq j'$.  
\end{claim}

\begin{proof}[Proof of Claim~\ref{cl:first}]
To obtain a contradiction, assume that there are $i,j\in\{1,\ldots,r\}$ and $s,s',t,t'\in\{1,\ldots,n\}$ such that $e=v_s^iv_t^j$ and $e'=v_{s'}^iv_{t'}^j$ are distinct edges of $R$. 
As either $s\neq s'$ or $t\neq t'$, we can assume without loss of generality that $s\neq s'$ using symmetry. Let 
$\mathcal{I}=\{1,\ldots,r\}\cup\{e,e'\}$ and $\mathcal{J}=\{i\}\cup\{1+r,\ldots,2r\}$. We show that $\rank(\hat{L}[\mathcal{I},\mathcal{J}])\geq r+1$. We can write $\hat{L}[\mathcal{I},\mathcal{J}]$ as follows assuming that $i<j$ (the case $i>j$ is symmetric):
{\small
\[
\left(
\begin{array}{c|ccc|c|ccc|c|ccc}
\delta_{1i}   & a &\cdots &0&0&0&\cdots  &0    &0       &0&\cdots&0\\
\vdots          &    &\ddots&  & \vdots  &  &\ddots &     &\vdots&  &\ddots&   \\
\delta_{i-1i}& 0 &  \cdots&a&0&0&\cdots&0&0&0&\cdots&0\\ 
\hline
a+\delta_{ii}& 0 & \cdots&0&a&0&\cdots&0&0&0&\cdots&0\\ 
\hline
\delta_{i+1i}& 0 &  \cdots&0&0&a&\cdots&0&0&0&\cdots&0\\ 
\vdots          &    & \ddots&  & \vdots &  &\ddots &     &\vdots&  &\ddots&  \\
\delta_{j-1i}& 0 &  \cdots&0&0&0&\cdots&a&0&0&\cdots&0\\ 
\hline
\delta_{ji}& 0 &  \cdots&0&0&0&\cdots&0&a&0&\cdots&0\\ 
\hline
\delta_{j+1i}& 0 & \cdots&0&0&0&\cdots&0&0&a&\cdots&0\\ 
\vdots          &    &  \ddots&  & \vdots  &  &\ddots &     &\vdots&  &\ddots&  \\
\delta_{ri}& 0 & \cdots&0&0&0&\cdots&0&0&0&\cdots&a\\ 
\hline
c(a-s)+\xi_{ei}& \xi_{e1}' & \cdots& \xi_{ei-1}'&ca+\xi_{ei}'&\xi_{ei+1}'&\cdots&\xi_{ej-1}'&ca+\xi_{ej}'&\xi_{ej+1}'&\cdots&\xi_{er}'\\
c(a-s')+\xi_{e'i}& \xi_{e'1}' & \cdots& \xi_{e'i-1}'&ca+\xi_{e'i}'&\xi_{e'i+1}'&\cdots&\xi_{e'j-1}'&ca+\xi_{e'j}'&\xi_{e'j+1}'&\cdots&\xi_{e'r}'\\
\end{array}
\right).
\]}
We subtract  the $i$-th and $j$-th rows multiplied by $c$ from the last two rows and obtain the matrix:
{\small
 \[
\left(
\begin{array}{c|ccc|c|ccc|c|ccc}
\delta_{1i}   & a &\cdots &0&0&0&\cdots  &0    &0       &0&\cdots&0\\
\vdots          &    & \ddots&  & \vdots  &  &\ddots &     &\vdots&  &\ddots&   \\
\delta_{i-1i}& 0 &  \cdots&a&0&0&\cdots&0&0&0&\cdots&0\\ 
\hline
a+\delta_{ii}& 0 &  \cdots&0&a&0&\cdots&0&0&0&\cdots&0\\ 
\hline
\delta_{i+1i}& 0 &  \cdots&0&0&a&\cdots&0&0&0&\cdots&0\\ 
\vdots          &    & \ddots&  & \vdots  &  &\ddots &     &\vdots&  &\ddots&  \\
\delta_{j-1i}& 0 & \cdots&0&0&0&\cdots&a&0&0&\cdots&0\\ 
\hline
\delta_{ji}& 0 &  \cdots&0&0&0&\cdots&0&a&0&\cdots&0\\ 
\hline
\delta_{j+1i}& 0 &  \cdots&0&0&0&\cdots&0&0&a&\cdots&0\\ 
\vdots          &    & \ddots&  & \vdots  &  &\ddots &     &\vdots&  &\ddots&  \\
\delta_{ri}& 0 &  \cdots&0&0&0&\cdots&0&0&0&\cdots&a\\ 
\hline
-cs+\xi_{ei}-c(\delta_{ii}+\delta_{ji})& \xi_{e1}' & \cdots& \xi_{ei-1}'&\xi_{ei}'&\xi_{ei+1}'&\cdots&\xi_{ej-1}'&\xi_{ej}'&\xi_{ej+1}'&\cdots&\xi_{er}'\\
-cs'+\xi_{e'i}-c(\delta_{ii}+\delta_{ji})& \xi_{e'1}' & \cdots& \xi_{e'i-1}'&\xi_{e'i}'&\xi_{e'i+1}'&\cdots&\xi_{e'j-1}'&\xi_{e'j}'&\xi_{e'j+1}'&\cdots&\xi_{e'r}'\\
\end{array}
\right).
\]}
Then we subtract the last row from the previous and delete the last row:
{\small
 \[
\left(
\begin{array}{c|ccc|c|ccc}
\delta_{1i}   & a &\cdots &0&0&0&\cdots  &0 \\
\vdots          &    & \ddots&  &  \vdots &  &\ddots   \\
\delta_{i-1i}& 0 &  \cdots&a&0&0&\cdots&0\\ 
\hline
a+\delta_{ii}& 0 &  \cdots&0&a&0&\cdots&0\\ 
\hline
\delta_{i+1i}& 0 &  \cdots&0&0&a&\cdots&0\\ 
\vdots         &   & \ddots &  & \vdots  &  &\ddots &  \\
\delta_{ri}& 0 &  \cdots&0&0&0&\cdots&a\\
\hline
c(s'-s)+\xi_{ei}-\xi_{e'i}& \xi_{e1}'-\xi_{e'1}' & \cdots&\xi_{ei-1}'-\xi_{e'i-1}' &\xi_{ei}'-\xi_{e'i}' &\xi_{e i+1}'-\xi_{e'i+1}' &\cdots& \xi_{er}'-\xi_{e'r}'
\end{array}
\right).
\]}
Note that $|c(s'-s)+\xi_{ei}-\xi_{e'i}|\geq |c(s'-s)|-2\sqrt{D'}>0$. Let $\beta=a/(c(s'-s)+\xi_{ei}-\xi_{e'i})$. Let also $\alpha_h=\xi_{eh}'-\xi_{e'h}'$ and $\beta_h=-\beta(\xi_{eh}'-\xi_{e'h}')$ for $h\in\{1,\ldots,r\}$. 
We subtract from the $i$-th row the last row multiplied by $\beta$ and made the last row the first one. We obtain the following matrix:
{\small
 \[U=
\left(
\begin{array}{c|ccc|c|ccc}
c(s'-s)+\xi_{ei}-\xi_{e'i}&\alpha_1 & \cdots&\alpha_{i-1} &\alpha_i &\alpha_{i+1} &\cdots& \alpha_r\\
\hline
\delta_{1i}   & a &\cdots &0&0&0&\cdots  &0 \\
\vdots          &    & \ddots&  & \vdots  &  &\ddots   \\
\delta_{i-1i}& 0 &  \cdots&a&0&0&\cdots&0\\ 
\hline
\delta_{ii}& \beta_1 &  \cdots&\beta_{i-1}&a+\beta_i&\beta_{i+1}&\cdots&\beta_r\\ 
\hline
\delta_{i+1i}& 0 &  \cdots&0&0&a&\cdots&0\\ 
\vdots         &   & \ddots &  & \vdots  &  &\ddots &  \\
\delta_{ri}& 0 &  \cdots&0&0&0&\cdots&a
\end{array}
\right).
\]}

Let 
{\small
\[
 W=
\left(
\begin{array}{c|ccc}
c(s'-s) & 0& \cdots&0 \\
\hline
0   & a &\cdots &0\\
\vdots &    & \ddots&   \\
0  &0&\cdots&a
\end{array}
\right). 
\]}
and
 {\small
 \[ W\rq{}=
\left(
\begin{array}{c|ccc|c|ccc}
\xi_{ei}-\xi_{e'i}&\alpha_1 & \cdots&\alpha_{i-1} &\alpha_i &\alpha_{i+1} &\cdots& \alpha_r\\
\hline
\delta_{1i}   & 0 &\cdots &0&0&0&\cdots  &0 \\
\vdots          &    & \ddots&  & \vdots  &  &\ddots   \\
\delta_{i-1i}& 0 &  \cdots&0&0&0&\cdots&0\\ 
\hline
\delta_{ii}& \beta_1 &  \cdots&\beta_{i-1}&\beta_i&\beta_{i+1}&\cdots&\beta_r\\ 
\hline
\delta_{i+1i}& 0 &  \cdots&0&0&0&\cdots&0\\ 
\vdots         &   & \ddots &  & \vdots  &  &\ddots &  \\
\delta_{ri}& 0 &  \cdots&0&0&0&\cdots&0
\end{array}
\right) 
\]}

Notice that $U=W+W\rq{}$, where $W$ is an $(r+1)\times (r+1)$ diagonal matrix.  
Since $|c(s'-s)|\geq c>a$, we have that  the absolute value of each of the diagonal elements of $W$ is at least $a$. 
Recall that   $|\delta_{pq}|\leq 2(\sqrt{r}+1)\sqrt{D'}$, $|\xi_{pq}|\leq \sqrt{D'}$ and $|\xi_{pq}'|\leq\sqrt{D'}$ for $p,q\in\{1,\ldots,r\}$.
We have that  $0<|\beta|<1$, $|\alpha_h|\leq 2\sqrt{D'}$ and $|\beta_h|\leq  2\sqrt{D'}$ for $h\in\{1,\ldots,r\}$. 
This implies that the absolute value of each of the elements of $W\rq{}$ is at most $2(\sqrt{r}+1) \sqrt{D'}$.
Hence, by Lemma~\ref{lem:diag}, $\rank(U)=\rank(W+W\rq{})=\rank(W)=r+1$. 
%
That is, $\rank(U)=r+1\leq \rank (\hat{L}[\mathcal{I},\mathcal{J}])\leq \rank(\hat{L})=r$; a contradiction.
\end{proof}

By Claim~\ref{cl:first}, $R$ has no two edges with their end-vertices in the same sets of the partition $V_1,V_2,\ldots,V_r$. In particular, this means that for each $i\in\{1,\ldots,r\}$, $R$ contains exactly $r-1$ edges incident to vertices of $V_i$. To prove Claim~\ref{cl:clique}, we will argue that these edges are incident to the same vertex of $V_i$. We need the following auxiliary claim.

\begin{claim}\label{cl:second}
For all distinct $i,j\in\{1,\ldots,r\}$, $|\delta_{ij}|<1/4$.
\end{claim}

\begin{proof}[Proof of Claim~\ref{cl:second}]
We show that for every $i\in\{1,\ldots,r\}$ and all pairs of distinct $p,q\in\{1,\ldots,r\}$ such that $p,q\neq i$, $|\delta_{pi}+\delta_{qi}|<1/8$. 
The proof is similar to the proof of Claim~\ref{cl:first}. To obtain a contradiction, assume that for some $i\in\{1,\ldots,r\}$, there are distinct $p,q\in\{1,\ldots,r\}$ such that $p,q\neq i$ and 
$|\delta_{pi}+\delta_{qi}|\geq1/8$. By Claim~\ref{cl:first}, there is $e=v_s^pv_t^q\in R$ for some $s,t\in \{1,\ldots,n\}$.
Let 
$\mathcal{I}=\{1,\ldots,r\}\cup\{e\}$ and $\mathcal{J}=\{i\}\cup\{1+r,\ldots,2r\}$. We show that $\rank(\hat{L}[\mathcal{I},\mathcal{J}])\geq r+1$.
For this, we write  $\hat{L}[\mathcal{I},\mathcal{J}]$ assuming that $i<p<q$ (the other cases are symmetric):
{\small
\[
\left(
\begin{array}{c|ccc|c|ccc|c|ccc|c|ccc}
\delta_{1i}   & a &\cdots &0&0&0&\cdots  &0    &0       &0&\cdots&0&0&0&\cdots&0\\
\vdots          &    &\ddots&  &  \vdots &  &\ddots &     &\vdots&  &\ddots& &\vdots& &\ddots &  \\
\delta_{i-1i}& 0 &  \cdots&a&0&0&\cdots&0&0&0&\cdots&0&0&0&\cdots&0\\ 
\hline
a+\delta_{ii}& 0 & \cdots&0&a&0&\cdots&0&0&0&\cdots&0&0&0&\cdots&0\\ 
\hline
\delta_{i+1i}   & 0 &\cdots &0&0&a&\cdots  &0    &0       &0&\cdots&0&0&0&\cdots&0\\
\vdots          &    &\ddots&  &  \vdots &  &\ddots &     &\vdots&  &\ddots& &\vdots& &\ddots &  \\
\delta_{p-1i}& 0 &  \cdots&0&0&0&\cdots&a&0&0&\cdots&0&0&0&\cdots&0\\ 
\hline
\delta_{pi}& 0 & \cdots&0&0&0&\cdots&0&a&0&\cdots&0&0&0&\cdots&0\\ 
\hline
\delta_{p+1i}   & 0 &\cdots &0&0&0&\cdots  &0    &0       &a&\cdots&0&0&0&\cdots&0\\
\vdots          &    &\ddots&  &  \vdots &  &\ddots &     &\vdots&  &\ddots& &\vdots& &\ddots &  \\
\delta_{q-1i}& 0 &  \cdots&0&0&0&\cdots&0&0&0&\cdots&a&0&0&\cdots&0\\ 
\hline
\delta_{qi}& 0 & \cdots&0&0&0&\cdots&0&0&0&\cdots&0&a&0&\cdots&0\\ 
\hline
\delta_{q+1i}   & 0 &\cdots &0&0&0&\cdots  &0    &0       &0&\cdots&0&0&a&\cdots&0\\
\vdots          &    &\ddots&  &  \vdots &  &\ddots &     &\vdots&  &\ddots& &\vdots& &\ddots &  \\
\delta_{ri}& 0 &  \cdots&0&0&0&\cdots&0&0&0&\cdots&0&0&0&\cdots&a\\ 
\hline
\xi_{ei}& \xi_{e1}' & \cdots& & & &\cdots &\xi_{ep-1}'&ca+\xi_{ep}'&\xi_{ep+1}'&\cdots&\xi_{eq-1}'&ca+\xi_{eq}'&\xi_{eq+1}'&\cdots&\xi_{er}'
\end{array}
\right).
\]}
We subtract  the $i$-th and $j$-th rows multiplied by $c$ from the last row and obtain the matrix:
{\small
\[
\left(
\begin{array}{c|ccc|c|ccc|c|ccc|c|ccc}
\delta_{1i}   & a &\cdots &0&0&0&\cdots  &0    &0       &0&\cdots&0&0&0&\cdots&0\\
\vdots          &    &\ddots&  &  \vdots &  &\ddots &     &\vdots&  &\ddots& &\vdots& &\ddots &  \\
\delta_{i-1i}& 0 &  \cdots&a&0&0&\cdots&0&0&0&\cdots&0&0&0&\cdots&0\\ 
\hline
a+\delta_{ii}& 0 & \cdots&0&a&0&\cdots&0&0&0&\cdots&0&0&0&\cdots&0\\ 
\hline
\delta_{i+1i}   & 0 &\cdots &0&0&a&\cdots  &0    &0       &0&\cdots&0&0&0&\cdots&0\\
\vdots          &    &\ddots&  &  \vdots &  &\ddots &     &\vdots&  &\ddots& &\vdots& &\ddots &  \\
\delta_{p-1i}& 0 &  \cdots&0&0&0&\cdots&a&0&0&\cdots&0&0&0&\cdots&0\\ 
\hline
\delta_{pi}& 0 & \cdots&0&0&0&\cdots&0&a&0&\cdots&0&0&0&\cdots&0\\ 
\hline
\delta_{p+1i}   & 0 &\cdots &0&0&0&\cdots  &0    &0       &a&\cdots&0&0&0&\cdots&0\\
\vdots          &    &\ddots&  &  \vdots &  &\ddots &     &\vdots&  &\ddots& &\vdots& &\ddots &  \\
\delta_{q-1i}& 0 &  \cdots&0&0&0&\cdots&0&0&0&\cdots&a&0&0&\cdots&0\\ 
\hline
\delta_{qi}& 0 & \cdots&0&0&0&\cdots&0&0&0&\cdots&0&a&0&\cdots&0\\ 
\hline
\delta_{q+1i}   & 0 &\cdots &0&0&0&\cdots  &0    &0       &0&\cdots&0&0&a&\cdots&0\\
\vdots          &    &\ddots&  &  \vdots &  &\ddots &     &\vdots&  &\ddots& &\vdots& &\ddots &  \\
\delta_{ri}& 0 &  \cdots&0&0&0&\cdots&0&0&0&\cdots&0&0&0&\cdots&a\\ 
\hline
-c(\delta_{pi}+\delta_{qi})+\xi_{ei}& \xi_{e1}' & \cdots& & & &\cdots &\xi_{ep-1}'&\xi_{ep}'&\xi_{ep+1}'&\cdots&\xi_{eq-1}'&\xi_{eq}'&\xi_{eq+1}'&\cdots&\xi_{er}'
\end{array}
\right).
\]}
 Since $|\delta_{pi}+\delta_{qi}|\geq 1/8$, $|c(\delta_{pi}+\delta_{qi})|>|\xi_{ei}|$. Let $\alpha=a/(c(\delta_{pi}+\delta_{qi})-\xi_{ei})$ and let 
$\alpha_h= \alpha\xi_{eh}'$ for $h\in\{1,\ldots,r\}$. We add to the $i$-th row the last row multiplied by $\alpha$ and then make the last row the first one:
{\small
\[
\left(
\begin{array}{c|ccc|c|ccc}
-c(\delta_{pi}+\delta_{qi})+\xi_{ei}& \xi_{e1}' & \cdots& & & &\cdots &\xi_{er}'\\
\hline
\delta_{1i}   & a &\cdots &0&0&0&\cdots  &0\\   
\vdots          &    &\ddots&  &  \vdots &  &\ddots &       \\
\delta_{i-1i}& 0 &  \cdots&a&0&0&\cdots&0\\ 
\hline
\delta_{ii}& \alpha_1 & \cdots&\alpha_{i-1}&a+\alpha_i&\alpha_{i+1}&\cdots&\alpha_r\\ 
\hline
\delta_{i+1i}   & a &\cdots &0&0&a&\cdots  &0    \\
\vdots          &    &\ddots&  &  \vdots &  &\ddots &       \\
\delta_{ri}& 0 &  \cdots&0&0&0&\cdots& a
\end{array}
\right).
\]}
Denote by $U$ the constructed matrix.
 
Recall that $|\delta_{ji}|\leq 2(\sqrt{r}+1)\sqrt{D'}$  for $j\in\{1,\ldots,r\}$, $|\xi_{ei}|\leq\sqrt{D'}$ and $|\xi_{eh}'|\leq \sqrt{D'}$ for $h\in\{1,\ldots,r\}$.
Since $|\delta_{pi}+\delta_{qi}|\geq 1/8$ and $c=9a$, $|c(\delta_{pi}+\delta_{qi})-\xi_{ei} |>a$ and, therefore, $|\alpha|<1$. Then 
$|\alpha_h|\leq \sqrt{D'}$ for $h\in\{1,\ldots,r\}$.  Then by Lemma~\ref{lem:diag}, $\rank(U)$ is the rank of the $(r+1)\times(r+1)$ diagonal matrix
{
 \[
\left(
\begin{array}{c|ccc}
-c(\delta_{pi}+\delta_{qi})& 0& \cdots&0 \\
\hline
0   & a &\cdots &0\\
\vdots &    & \ddots&   \\
0  &0&\cdots&a
\end{array}
\right),
\]}
that is, $\rank(U)=r+1\leq \rank (\hat{L}[\mathcal{I},\mathcal{J}])\leq \rank(\hat{L})=r$; a contradiction.
This proves that for every $i\in\{1,\ldots,r\}$ and all pairs of distinct $p,q\in\{1,\ldots,r\}$ such that $p,q\neq i$, $|\delta_{pi}+\delta_{qi}|<1/8$. 

To show the statement of Claim~\ref{cl:second}, consider $i\in\{1,\ldots,r\}$. Let $j\in\{1,\ldots,r\}$, $j\neq i$, be such that $|\delta_{ji}|=\max\{\delta_{hi}\mid 1\leq h\leq r,h\neq i\}$. 
Clearly, it is sufficient to prove that $|\delta_{ji}|< 1/4$. 
For the sake of contradiction, assume that  $|\delta_{ji}|\geq 1/4$. 
Because $r\geq 4$, there are distinct $p,q\in\{1,\ldots,r\}\setminus\{i,j\}$. 
We have that $|\delta_{ji}|\geq |\delta_{pi}|$. Hence,  $|\delta_{ji}+\delta_{pi}|=\delta_{ji}+\delta_{pi}$ if $\delta_{ji}>0$ and 
$|\delta_{ji}+\delta_{pi}|=-\delta_{ji}-\delta_{pi}$ otherwise. 
Suppose that $\delta_{ji}>0$.  Then because  $\vert \delta_{ji}+\delta_{pi}\vert<1/8$, we have 
$0\leq \delta_{ji}+\delta_{pi}<1/8$. By the same arguments, $0\leq  \delta_{ji}+\delta_{qi} <1/8$.
Because $\delta_{ji}>1/4$, $-\delta_{pi}>\delta_{ji}-1/8\geq 1/8$ and $-\delta_{qi}>\delta_{ji}-1/8\geq 1/8$. Then 
$|\delta_{pi}+\delta_{qi}|\geq 1/4>1/8$; a contradiction. 
Suppose  $\delta_{ji}<0$. 
Then $|\delta_{ji}+\delta_{pi}|=-\delta_{ji}-\delta_{pi}$ and $|\delta_{ji}+\delta_{qi}|=-\delta_{ji}-\delta_{qi}$ (because of the definition of $j$).
Hence, we have that 
$0\leq -\delta_{ji}-\delta_{pi}<1/8$ and $0\leq -\delta_{ji}-\delta_{qi}<1/8$ and we obtain a contradiction in the same way.
\end{proof}

Now we are ready to make the final step of the proof of Claim~\ref{cl:clique}.

\begin{claim}\label{cl:third}
If $v_s^iv_t^p, v_{s'}^iv_{t'}^q\in R$ for some $i,p,q\in\{1,\ldots,r\}$ and $s,s',t,t'\in\{1,\ldots,n\}$, then $s=s'$.
\end{claim}

\begin{proof}[Proof of Claim~\ref{cl:third}]
To obtain a contradiction, assume that there are $i,p,q\in\{1,\ldots,r\}$ and $s,s',t,t'\in\{1,\ldots,n\}$ such that $s\neq s'$ and
$e=v_s^iv_t^p$ and $e'=v_{s'}^iv_{t'}^q$ are edges of $R$.  Note that by Claim~\ref{cl:first}, $p\neq q$.
 Let 
$\mathcal{I}=\{1,\ldots,r\}\cup\{e,e'\}$ and $\mathcal{J}=\{i\}\cup\{1+r,\ldots,2r\}$. We again show that $\rank(\hat{L}[\mathcal{I},\mathcal{J}])\geq r+1$
using Gaussian elimination combined with Lemma~\ref{lem:diag}.
We write  $\hat{L}[\mathcal{I},\mathcal{J}]$ as follows assuming that $i<p<q$ (the other cases are symmetric):
{\small
\[
\left(
\begin{array}{c|ccc|c|ccc|c|ccc|c|ccc}
\delta_{1i}   & a &\cdots &0&0&0&\cdots  &0    &0       &0&\cdots&0&0&0&\cdots&0\\
\vdots          &    &\ddots&  &  \vdots &  &\ddots &     &\vdots&  &\ddots& &\vdots& &\ddots &  \\
\delta_{i-1i}& 0 &  \cdots&a&0&0&\cdots&0&0&0&\cdots&0&0&0&\cdots&0\\ 
\hline
a+\delta_{ii}& 0 & \cdots&0&a&0&\cdots&0&0&0&\cdots&0&0&0&\cdots&0\\ 
\hline
\delta_{i+1i}   & 0 &\cdots &0&0&a&\cdots  &0    &0       &0&\cdots&0&0&0&\cdots&0\\
\vdots          &    &\ddots&  &  \vdots &  &\ddots &     &\vdots&  &\ddots& &\vdots& &\ddots &  \\
\delta_{p-1i}& 0 &  \cdots&0&0&0&\cdots&a&0&0&\cdots&0&0&0&\cdots&0\\ 
\hline
\delta_{pi}& 0 & \cdots&0&0&0&\cdots&0&a&0&\cdots&0&0&0&\cdots&0\\ 
\hline
\delta_{p+1i}   & 0 &\cdots &0&0&0&\cdots  &0    &0       &a&\cdots&0&0&0&\cdots&0\\
\vdots          &    &\ddots&  &  \vdots &  &\ddots &     &\vdots&  &\ddots& &\vdots& &\ddots &  \\
\delta_{q-1i}& 0 &  \cdots&0&0&0&\cdots&0&0&0&\cdots&a&0&0&\cdots&0\\ 
\hline
\delta_{qi}& 0 & \cdots&0&0&0&\cdots&0&0&0&\cdots&0&a&0&\cdots&0\\ 
\hline
\delta_{q+1i}   & 0 &\cdots &0&0&0&\cdots  &0    &0       &0&\cdots&0&0&a&\cdots&0\\
\vdots          &    &\ddots&  &  \vdots &  &\ddots &     &\vdots&  &\ddots& &\vdots& &\ddots &  \\
\delta_{ri}& 0 &  \cdots&0&0&0&\cdots&0&0&0&\cdots&0&0&0&\cdots&a\\ 
\hline
c(a-s)+\xi_{ei}& \xi_{e1}' & \cdots& & & &\cdots &\xi_{ep-1}'&ca+\xi_{ep}'&\xi_{ep+1}'&\cdots&\xi_{eq-1}'&\xi_{eq}'&\xi_{eq+1}'&\cdots&\xi_{er}'\\
c(a-s')+\xi_{e'i}& \xi_{e'1}' & \cdots& & & &\cdots &\xi_{e'p-1}'&\xi_{e'p}'&\xi_{e'p+1}'&\cdots&\xi_{e'q-1}'&ca+\xi_{e'q}'&\xi_{e'q+1}'&\cdots&\xi_{e'r}'\\
\end{array}
\right).
\]}
We subtract the $p$-th row and the $q$-th multiplied by $c$ from the second last and the last row
respectively and obtain the following matrix:
{\small
\[
\left(
\begin{array}{c|ccc|c|ccc}
\delta_{1i}   & a &\cdots &0&0&0&\cdots  &0   \\
\vdots          &    &\ddots&  &  \vdots &  &\ddots & \\
\delta_{i-1i}& 0 &  \cdots&a&0&0&\cdots&0\\ 
\hline
a+\delta_{ii}& 0 & \cdots&0&a&0&\cdots&0\\ 
\hline
\delta_{i+1i}   & 0 &\cdots &0&0&a&\cdots  &0 \\
\vdots          &    &\ddots&  &  \vdots &  &\ddots &    \\
\delta_{ri}& 0 &  \cdots&0&0&0&\cdots&a\\ 
\hline
c(a-s)+\xi_{ei}-c\delta_{pi}& \xi_{e1}' & \cdots&\xi_{ei-1}' & \xi_{ei}'& \xi_{ei+1}'&\cdots &\xi_{er}'\\
c(a-s')+\xi_{e'i}-c\delta_{qi}& \xi_{e'1}' & \cdots&\xi_{e'i-1}'  &\xi_{e'i}'  & \xi_{e'i+1}' &\cdots &\xi_{e'r}'
\end{array}
\right).
\]}
Let $\alpha_h=\xi_{eh}'-\xi_{e'h}'$ for $h\in\{1,\ldots,r\}$.
We subtract the last row from the previous and delete the last row:
{\small
\[
\left(
\begin{array}{c|ccc|c|ccc}
\delta_{1i}   & a &\cdots &0&0&0&\cdots  &0   \\
\vdots          &    &\ddots&  &  \vdots &  &\ddots & \\
\delta_{i-1i}& 0 &  \cdots&a&0&0&\cdots&0\\ 
\hline
a+\delta_{ii}& 0 & \cdots&0&a&0&\cdots&0\\ 
\hline
\delta_{i+1i}   &0 &\cdots &0&0&a&\cdots  &0 \\
\vdots          &    &\ddots&  &  \vdots &  &\ddots &    \\
\delta_{ri}& 0 &  \cdots&0&0&0&\cdots&a\\ 
\hline
c(s'-s)+\xi_{ei}-\xi_{e'i} -c(\delta_{pi}-\delta_{qi})& \alpha_1 & \cdots&\alpha_{i-1} & \alpha_i& \alpha_{i+1}&\cdots &\alpha_r\\
\end{array}
\right).
\]}
By Claim~\ref{cl:second}, $|\delta_{pi}-\delta_{qi}|<1/2$. Hence, $|c(s'-s)+\xi_{ei}-\xi_{e'i} -c(\delta_{pi}-\delta_{qi})|>0$ as $|\xi_{ei}-\xi_{e'i}|<2\sqrt{D'}$. We let 
$\beta=a/(c(s'-s)+\xi_{ei}-\xi_{e'i} -c(\delta_{pi}-\delta_{qi}))$ and define $\beta_h=-\beta\alpha_h$ for $h\in\{1,\ldots,r\}$.
We subtract from the $i$-th row the last row multiplied by $\beta$ 
and then make the last row the first one:
{\small
\[
\left(
\begin{array}{c|ccc|c|ccc}
c(s'-s)+\xi_{ei}-\xi_{e'i} -c(\delta_{pi}-\delta_{qi})& \alpha_1 & \cdots&\alpha_{i-1} & \alpha_i& \alpha_{i+1}&\cdots &\alpha_r\\
\hline
\delta_{1i}   & a &\cdots &0&0&0&\cdots  &0   \\
\vdots          &    &\ddots&  &  \vdots &  &\ddots & \\
\delta_{i-1i}& 0 &  \cdots&a&0&0&\cdots&0\\ 
\hline
\delta_{ii}& \beta_1  & \cdots&\beta_{i-1}&a+\beta_i&\beta_{i+1}&\cdots&\beta_r\\ 
\hline
\delta_{i+1i}   &0 &\cdots &0&0&a&\cdots  &0 \\
\vdots          &    &\ddots&  &  \vdots &  &\ddots &    \\
\delta_{ri}& 0 &  \cdots&0&0&0&\cdots&a\\ 
\end{array}
\right).
\]}
Let $U$ denote the obtained $(r+1)\times(r+1)$ matrix.

Recall that $|\delta_{hi}|\leq 2(\sqrt{r}+1)\sqrt{D'}$  for $h\in\{1,\ldots,r\}$, $|\xi_{ei}|,|\xi_{e'i}|\leq\sqrt{D'}$ and $|\xi_{eh}'|,|\xi_{e'h}'|\leq \sqrt{D'}$ for $h\in\{1,\ldots,r\}$.
In particular, this means that $|\alpha_h|\leq 2\sqrt{D'}$ for $h\in\{1,\ldots,r\}$. 
By Claim~\ref{cl:second}, $|\delta_{pi}-\delta_{qi}|<1/2$. Hence, 
$|c(s'-s)+\xi_{ei}-\xi_{e'i} -c(\delta_{pi}-\delta_{qi})|>|c/2+\xi_{ei}-\xi_{e'i}|>a$, because  $|\xi_{ei}-\xi_{e'i}|<2\sqrt{D'}$. 
Therefore, $|\beta|<1$ and $|\beta_h|\leq 2\sqrt{D'}$ for $h\in\{1,\ldots,r\}$.
We also have that 
$|c(s'-s)-c(\delta_{pi}-\delta_{qi})|>a$.
 Then, by Lemma~\ref{lem:diag}, $\rank(U)$ is the rank of the $(r+1)\times(r+1)$ diagonal matrix
{
 \[
\left(
\begin{array}{c|ccc}
c(s'-s)-c(\delta_{pi}-\delta_{qi})& 0& \cdots&0 \\
\hline
0   & a &\cdots &0\\
\vdots &    & \ddots&   \\
0  &0&\cdots&a
\end{array}
\right).
\]}
That is, $\rank(U)=r+1\leq \rank (\hat{L}[\mathcal{I},\mathcal{J}])\leq \rank(\hat{L})=r$, which is a contradiction.
\end{proof}

Now we prove Claim~\ref{cl:clique}.
By Claim~\ref{cl:first},  for each $i\in\{1,\ldots,r\}$, $R$ contains exactly $r-1$ edges incident to vertices of $V_i$. Then by Claim~\ref{cl:third},  these vertices are incident to the same vertex of $V_i$. Denote these veritices by $v_{i_1}^1,\ldots,v_{i_r}^r$. Then $R=\{v_{i_s}^sv_{i_t}^t\mid 1\leq s<t\leq r\}$. This completes the proof of Claim~\ref{cl:clique}. 

\medskip
We obtain that $(G,V_1,\ldots,V_r)$ is a yes-instance of \textsc{Multicolored Clique}. This proves Claim~\ref{cl:crucial}.

\medskip
To complete the proof of Theorem~\ref{thm:hard-dim}, recall that $M$ is $(r+1)m\times 2r$ integer matrix and the absolute value of each element is at most $c=\Oh(r^2mn^2)$. Therefore, the bitsize $N$ of $M$ is $\Oh(|V(G)|^4\log |V(G)|)$. Observe that, given $(G,V_1,\ldots,V_r)$, $M$ can be constructed in polynomial time.  
Assume that there is a $\omega$-approximation algorithm $\mathcal{A}$ for \probPCAOut with running time $f(d)\cdot N^{o(d)}$ for a computable function $f$.
If $(G,V_1,\ldots,V_r)$ is a yes-instance of \textsc{Multicolored Clique}, then ${\sf Opt}(G,r,k)\leq D$. Therefore, $\mathcal{A}$ applied to $(M,r,k)$ reports that there is  a feasible solution $(L,S)$ with $\|M-L-S\|_F^2\leq \omega{\sf Opt}(G,r,k)\leq \omega D$ by Claim~\ref{cl:crucial}. For the opposite direction, if $\mathcal{A}$ reports that there is a feasible solution $(L,S)$ with $\|M-L-S\|_F^2\leq \omega D$, then $(G,V_1,\ldots,V_r)$ is a yes-instance of \textsc{Multicolored Clique} by Claim~\ref{cl:crucial}. Hence, $\mathcal{A}$ reports the existence of a feasible  
solution $(L,S)$ with $\|M-L-S\|_F^2\leq \omega D$ if and only if $(G,V_1,\ldots,V_r)$ is a yes-instance of \textsc{Multicolored Clique}.
Since $N=\Oh(|V(G)|^4\log |V(G)|)$ and $2r$, we obtain that $\mathcal{A}$ solves \textsc{Multicolored Clique} in time $f(2k)\cdot |V(G)|^{o(r)}$ contradicting ETH. 
\end{proof}

As \textsc{Multicolored Clique}  is well-known to be \classW{1}-hard (see \cite{DBLP:journals/tcs/FellowsHRV09,CyganFKLMPPS15}), our reduction gives the following corollary based on the weaker conjecture that $\classFPT\neq\classW{1}$.  We refer to the book~\cite{CyganFKLMPPS15} for the formal definitions of the parameterized complexity classes \classFPT and \classW{1}. Note that ETH implies that $\classFPT\neq\classW{1}$ but not the other way around.

\begin{corollary}\label{cor:whard-dim}
For any $\omega\geq 1$, there is no $\omega$-approximation algorithm for \probPCAOut with running time $f(d)\cdot N^{\Oh(1)}$ for any computable function $f$ unless $\classFPT=\classW{1}$, where $N$ is the bitsize of the input matrix $M$.
\end{corollary}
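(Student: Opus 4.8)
The plan is to observe that the reduction already constructed in the proof of Theorem~\ref{thm:hard-dim} is, in fact, a \emph{parameterized} reduction from \textsc{Multicolored Clique} parameterized by the number of color classes~$r$, and then to run an approximation algorithm against it. Recall that \textsc{Multicolored Clique} is \classW{1}-hard parameterized by~$r$ (see~\cite{DBLP:journals/tcs/FellowsHRV09,CyganFKLMPPS15}). Given an instance $(G,V_1,\ldots,V_r)$, the construction in the proof of Theorem~\ref{thm:hard-dim} produces, in polynomial time, an instance $(M,r,k)$ of \probPCAOut together with the thresholds $D$ and $\omega D$, where $M$ is an $(r+1)m\times 2r$ matrix with integer entries of absolute value at most $c=\Oh(r^2mn^2)$. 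In particular the dimension is $d=2r$, and the bitsize satisfies $N=\Oh(|V(G)|^4\log|V(G)|)$. Thus $d$ is bounded by a function of the source parameter~$r$ alone, which is exactly the requirement for a parameterized reduction.

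Next I would invoke Claim~\ref{cl:crucial} to convert the promise structure into a decision procedure, exactly as in the concluding paragraph of the proof of Theorem~\ref{thm:hard-dim}. If $(G,V_1,\ldots,V_r)$ is a yes-instance, then ${\sf Opt}(M,r,k)\le D$, so an $\omega$-approximation algorithm $\mathcal{A}$ must return a feasible solution $(L,S)$ with $\|M-L-S\|_F^2\le\omega D$; conversely, if $\mathcal{A}$ returns a feasible solution of value at most $\omega D$, then by the second part of Claim~\ref{cl:crucial} the instance is a yes-instance. Hence running $\mathcal{A}$ on $(M,r,k)$ and comparing the returned value to $\omega D$ decides \textsc{Multicolored Clique} correctly.

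Finally I would combine the time bounds. Assume for contradiction that for some fixed $\omega\ge 1$ there is an $\omega$-approximation algorithm $\mathcal{A}$ for \probPCAOut running in time $f(d)\cdot N^{\Oh(1)}$ for a computable function $f$. On the constructed instance we have $d=2r$ and $N=|V(G)|^{\Oh(1)}$, so performing the (polynomial-time) reduction and then running $\mathcal{A}$ takes time $f(2r)\cdot|V(G)|^{\Oh(1)}$, which is \classFPT in the parameter~$r$. This would place the \classW{1}-hard problem \textsc{Multicolored Clique} in \classFPT, implying $\classFPT=\classW{1}$.

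There is no genuinely hard step here: all the substance is already contained in the correctness argument for Claim~\ref{cl:crucial}, which is reused verbatim. The only point requiring care is the bookkeeping of parameterized-complexity notions — confirming that the reduction runs in polynomial time, that the new parameter $d=2r$ depends on the old parameter~$r$ only (and not on $|V(G)|$), and that the $\omega$-approximation guarantee together with Claim~\ref{cl:crucial} yields a correct yes/no answer rather than merely an estimate. Once these are verified, the corollary follows immediately.
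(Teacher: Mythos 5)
Your proposal is correct and follows essentially the same route the paper takes: the paper derives the corollary by observing that the reduction constructed for Theorem~\ref{thm:hard-dim} is a parameterized reduction from the \classW{1}-hard problem \textsc{Multicolored Clique}, so an FPT-time $\omega$-approximation would imply $\classFPT=\classW{1}$. You have simply spelled out the bookkeeping — $d=2r$, the polynomial-time reduction, and the use of Claim~\ref{cl:crucial} to turn approximation into decision — that the paper leaves implicit.
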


\else
\input{lowerbound.tex}
\fi




\iffull
\else
\section*{Acknowledgements}
This work is supported by the Research Council of Norway via the project ``MULTIVAL''.
\fi

\iffull
\bibliographystyle{siam}
\else
\bibliographystyle{icml2019}
\fi
\bibliography{pca_with_outliers,k-clustering}

\end{document}